\LetLtxMacro{\oldtextsc}{\textsc}
\renewcommand{\textsc}[1]{\oldtextsc{\scalefont{1.10}#1}}
\newcounter{parcount}
\DeclareRobustCommand{\parhead}[1]{\textbf{#1}~}
\definecolor{shadecolor}{gray}{0.9}
\newacronym{SGD}{sgd}{stochastic gradient descent}
\newacronym{AISGD}{ai-sgd}{averaged implicit \textsc{sgd}}
\newacronym{ESGD}{\textup{explicit} sgd}{}
\newacronym{ISGD}{implicit sgd}{}
\newacronym{ASGD}{asgd}{}
\newacronym{PROXSAG}{prox-sag}{}
\newacronym{PROXSVRG}{prox-svrg}{}
\newacronym{ADAGRAD}{adagrad}{}
\newcommand{\nn}{\nonumber}
\newcommand{\eminus}{\text{\sc{e}-}}
\newcommand\defeq{\stackrel{\mathclap{\normalfont\mbox{\scriptsize def}}}{=}}
\newcommand{\commentEq}[1]{
\hspace{10px}\text{ \footnotesize [{\em #1}]}}
\newtheorem{assumption}{Assumption}
\newtheorem{definition}{Definition}
\newtheorem{theorem}{Theorem}
\newtheorem{lemma}{Lemma}
\newtheorem{corollary}{Corollary}
\newcommand{\Reals}[1]{\mathbb{R}^{#1}}
\newcommand{\Ex}[1]{\mathbb{E}\left (#1 \right )} 		 
\newcommand{\ExCond}[2]{\mathbb{E}\left (#1 \right|#2)} 		 
\newcommand{\Var}[1]{\mathrm{Var}\left(#1\right)}   		 
\newcommand{\bigO}[1]{\mathcal{O}(#1)}
\newcommand{\littleO}[1]{o(#1)}
\newcommand{\aisgd}{\gls{AISGD}\xspace}
\newcommand{\sgd}{\gls{SGD}\xspace}
\newcommand{\asgd}{\gls{ASGD}\xspace}
\newcommand{\implicit}{\gls{ISGD}\xspace}
\newcommand{\proxsag}{\gls{PROXSAG}\xspace}
\newcommand{\proxsvrg}{\gls{PROXSVRG}\xspace}
\newcommand{\adagrad}{\gls{ADAGRAD}\xspace}
\newcommand{\loss}{L}
\newcommand{\exloss}{\ell}
\newcommand{\thetaBar}[1]{\bar{\theta}_{#1}}
\newcommand{\thetastar}{\theta_\star}
\newcommand{\thetaim}[1]{\theta^{\mathrm{im}}_{#1}}
\newcommand{\mF}[1]{\mathcal{F}_{#1}}
\newcommand{\eqIsgd}{\eqref{eq:implicit}\xspace}
\newcommand{\eqAisgd}{\eqref{eq:averaging}\xspace}
\newcommand{\remark}{
\noindent \emph{Remarks.}
}
\newcommand{\constLipZero}{\lambda_0}
\newcommand{\constLipOne}{\lambda_1}
\newcommand{\constLipTwo}{\lambda_2}
\newcommand{\constGamma}{\gamma}
\newcommand{\Fisher}[1]{\mathcal{I}(#1)}
\newcommand{\Fisherobs}[1]{\hat{\mathcal{I}}(#1)}
\newcommand{\minF}{\phi}
\newcommand{\maxF}{\minF}
\newcommand{\minEigF}{\underline{\lambda_f}}
\newcommand{\Fn}[1]{\mathcal{F}_{#1}}
\newcommand{\bias}[1]{||\thetaim{n}-\thetastar||^2}
\newcommand{\StateAssumptions}{
Let $\Fn{n} = \{\theta_0, \xi_1, \xi_2, \ldots, \xi_{n}\}$ denote the
filtration that process $\theta_n$ \eqIsgd\ is adapted to.
The norm $||\cdot||$ will denote the $L_2$ norm.
The symbol $\triangleq$ indicates a definition, and the symbol $\defeq$ denotes ``equal by definition''. For example, $x \triangleq y$ defines $x$ as equal to known variable $y$, 
whereas $x \defeq y$ denotes that $x$ is equal to $y$ by definition. We will not use 
this formalism when defining constants.
For two positive sequences $a_n, b_n$, we write
$b_n = \bigO{a_n}$ if there exists a fixed $c>0$ such that $b_n \le c a_n$, for all $n$;
also, $b_n = \littleO{a_n}$ if $b_n / a_n \to 0$.
When a positive scalar sequence $a_n$ is monotonically decreasing to zero,
we write $a_n \downarrow 0$.
Similarly, for a sequence $X_n$ of vectors or matrices,
$X_n = \bigO{a_n}$ denotes that $||X_n|| = \bigO{a_n}$,
and $X_n = \littleO{a_n}$ denotes that $||X_n|| = \littleO{a_n}$.
For two matrices $A, B$, $A \preceq B$ denotes that $B-A$ is nonnegative-definite;
$\mathrm{tr}(A)$ denotes the trace of $A$.

We now introduce the main assumptions
pertaining to the theory of this paper.
\begin{assumption}
\label{A:linear_loss}
The loss function $\loss(\theta, \xi)$ is almost-surely differentiable.
The random vector $\xi$ can be decomposed as $\xi = (x, y)$, $x\in\Reals{p}, y\in\Reals{d}$,
 such that
\begin{align}
\label{eq:loss}
\loss(\theta, \xi) = \loss(x^\intercal\theta, y).
\end{align}
\end{assumption}

\begin{assumption}
 \label{A:gamma_n}
	The learning rate sequence $\{\gamma_n\}$ is defined as
	$\gamma_n = \gamma_1 n^{-\constGamma}$,
	where $\gamma_1>0$ and $\gamma \in (1/2, 1]$.
\end{assumption}

\begin{assumption}[Lipschitz conditions]
\label{A:Lip}
For all $\theta_1, \theta_2\in\Theta$, a combination
of the following conditions is satisfied almost-surely:
\begin{enumerate}[(a)]
\item  \label{A:LipZero}
The loss function $L$ is Lipschitz-continuous with parameter $\constLipZero$, i.e.,
\begin{align}
| L(\theta_1, \xi) - L(\theta_2, \xi) |  \le \constLipZero ||\theta_1 - \theta_2||,\nn
\end{align}

\item
 \label{A:LipOne}
The map $\nabla L$ is Lipschitz-continuous with
parameter $\constLipOne$, i.e.,
\begin{align}
|| \nabla L(\theta_1, \xi) - \nabla L(\theta_2, \xi) ||  \le \constLipOne ||\theta_1 - \theta_2||,\nn
\end{align}
\item \label{A:LipTwo}
The map $\nabla^2 L$ is Lipschitz-continuous with parameter $\constLipTwo$, i.e.,
\begin{align}
|| \nabla^2 L(\theta_1, \xi) - \nabla^2 L(\theta_2, \xi) ||  \le \constLipTwo ||\theta_1 - \theta_2||.\nn
\end{align}
\end{enumerate}
\end{assumption}

\if0
\begin{assumption}
\label{A:convexity}
For all $\theta_1, \theta_2\in\Theta$,
the expected loss function $\exloss$ is convex, i.e.,
\begin{align}
\exloss(\theta_1) \ge \exloss(\theta_2) + \nabla \exloss(\theta_2)^\intercal (\theta_1-\theta_2).\nn
\end{align}
\end{assumption}
\fi

\begin{assumption}
\label{A:Fisher}
The observed Fisher information matrix, $\Fisherobs{\theta}\triangleq\nabla^2 \loss(\theta, \xi)$,
has non-vanishing trace, i.e., there exists $\minF>0$ such that
$\mathrm{tr}(\Fisherobs{\theta}) \ge \minF$, almost-surely, for
all $\theta\in\Theta$.
The expected Fisher information matrix, $\Fisher{\theta}\triangleq \Ex{\Fisherobs{\theta}}$, has minimum eigenvalue $0 < \minEigF \le \minF$, for all $\theta\in\Theta$.
\end{assumption}

\begin{assumption}
\label{A:errors}
The zero-mean random variable $W_\theta \triangleq \nabla \loss(\theta, \xi) - \nabla \exloss(\theta)$ is
square-integrable, such that, for a fixed positive-definite $\Sigma$,
\begin{align}
\Ex{W_{\thetastar} W_{\thetastar}^\intercal} \preceq \Sigma.\nn
\end{align}
\end{assumption}
\vspace{-5px}
}
\newcommand{\refAssumption}[1]{Assumption \ref{A:#1}}
\newcommand{\aLinearLoss}{\ref{A:linear_loss}}
\newcommand{\assumeLinearLoss}{\refAssumption{linear_loss}}
\newcommand{\aGamma}{\ref{A:gamma_n}}
\newcommand{\assumeGamma}{\refAssumption{gamma_n}}
\newcommand{\aLipZero}{\ref{A:Lip}\eqref{A:LipZero}}
\newcommand{\assumeLipZero}{\refAssumption{Lip}\eqref{A:LipZero}}
\newcommand{\aLipOne}{\ref{A:Lip}\eqref{A:LipOne}}
\newcommand{\assumeLipOne}{\refAssumption{Lip}\eqref{A:LipOne}}
\newcommand{\aLipTwo}{\ref{A:Lip}\eqref{A:LipTwo}}
\newcommand{\assumeLipTwo}{\refAssumption{Lip}\eqref{A:LipTwo}}
\newcommand{\aConvex}{\ref{A:convexity}}
\newcommand{\assumeConvex}{\refAssumption{convexity}}
\newcommand{\aFisher}{\ref{A:Fisher}}
\newcommand{\assumeFisher}{\refAssumption{Fisher}}
\newcommand{\aErrors}{\ref{A:errors}}
\newcommand{\assumeErrors}{\refAssumption{errors}}
\newcommand{\DefinitionDerivative}{
\vspace{5px}
\begin{definition}
\label{definition:derivative}
Suppose that \assumeLinearLoss\ holds.
For observation $\xi=(x, y)$, the first derivative with respect to the natural parameter $x^\intercal\theta$ is
denoted by $L'(\theta, \xi)$, and is defined as
\begin{align}
\label{eq:definition_derivative}
L'(\theta, \xi) \triangleq \frac{\partial L(\theta, \xi)}{\partial (x^\intercal\theta)}\defeq
 \frac{\partial L(x^\intercal\theta, y)}{\partial (x^\intercal\theta)}.
\end{align}
Similarly, $L''(\xi, \theta)\triangleq \frac{\partial L'(\theta, \xi)}{\partial (x^\intercal\theta)}$.
\end{definition}
}
\newcommand{\LemmaImplicit}{
\begin{lemma}
\label{lemma:implicit_algo}
Suppose that \assumeLinearLoss\ holds,
and consider functions $L', L''$ from Definition \ref{definition:derivative}.
Then, almost-surely,
\begin{align}
\label{eq:lemma_scaling}
\nabla \loss(\theta_n, \xi_n) = s_n\nabla \loss(\theta_{n-1}, \xi_n);
\end{align}
the scalar  $s_n$ satisfies the fixed-point equation,
\begin{align}
\label{eq:lemma_fp}
s_n\kappa_{n-1} =  \loss'\left(\theta_{n-1} - s_n\gamma_n\kappa_{n-1} x_n, \xi_n \right),
\end{align}
where $\kappa_{n-1}\triangleq \loss'(\theta_{n-1}, \xi_n)$.
Moreover, if $L''(\theta, \xi) \ge0$ almost-surely for all $\theta \in \Theta$,
then
\begin{align}
s_n \in  \begin{cases} [\kappa_{n-1},0) &\mbox{if } \kappa_{n-1} < 0, \\ 
[0, \kappa_{n-1}] & \mbox{otherwise.} \end{cases}\nn
\end{align}

\end{lemma}
}
\newcommand{\TheoremMSE}{
\vspace{5px}
\begin{theorem}
\label{theorem:mse2}
Suppose that Assumptions \aLinearLoss, \aGamma, \aLipZero,
and \aFisher\ hold. Define $\delta_n \triangleq \Ex{||\theta_n-\thetastar||^2}$, and constants
$\Gamma^2 = 4\constLipZero^2\sum\gamma_i^2<\infty$,
$\epsilon = (1+\gamma_1 (\maxF-\minEigF))^{-1}$,
and $\lambda =1+\gamma_1 \minEigF \epsilon$.
Then, there exists constant $n_0>0$ such that, for all $n>0$,
\begin{align}
\delta_n \le
&  (8 \constLipZero^2 \gamma_1\lambda/\minEigF\epsilon) n^{-\gamma} + e^{-\log \lambda \cdot n^{1-\gamma}}[\delta_0 + \lambda^{n_0} \Gamma^2].\nn
 \end{align}
\end{theorem}
\vspace{-20px}
}
\newcommand{\cTwo}{\frac{2\gamma+1}{\minEigF^{1/2} n \gamma_n}}
\newcommand{\cFour}{\frac{\constLipTwo}{2n \minEigF^{1/2}}}
\newcommand{\Kone}{(2K_3 \gamma_1^2\lambda/\minEigF\epsilon)^{1/2}}
\newcommand{\Ktwo}{[\zeta_0 + \lambda^{n_{0,2}} \Delta^3]^{1/2}}
\newcommand{\Kthree}{(8 \constLipZero^2 \gamma_1\lambda/\minEigF\epsilon)^{1/2}}
\newcommand{\Kfour}{[\delta_0 + \lambda^{n_{0, 1}} \Gamma^2]^{1/2}}
\newcommand{\TheoremMSEfourth}{
\vspace{5px}
\begin{theorem}
\label{theorem:mse4}
Suppose that Assumptions \aLinearLoss, \aGamma, \aLipZero,
and \aFisher\ hold. For a constant $K_3>0$, define $\zeta_n \triangleq \Ex{||\theta_n-\thetastar||^2}$, and constants
$\Delta^3\triangleq K_3 \sum\gamma_i^3<\infty$,
$\epsilon \triangleq (1+\gamma_1 (\maxF-\minEigF))^{-1}$,
and $\lambda \triangleq 1+\gamma_1 \minEigF \epsilon$.
Then, there exists constant $n_0$ such that, for all $n>0$,
\begin{align}
\zeta_n \le
&  (2K_3 \gamma_1^2\lambda/\minEigF\epsilon) n^{-2\gamma} + e^{-\log \lambda \cdot n^{1-\gamma}}[\zeta_0 + \lambda^{n_0} \Delta^3].\nn
 \end{align}
\end{theorem}
\vspace{-20px}
}
\newcommand{\TheoremMSEaisgd}{
\begin{theorem}
\label{theorem:mse2_aisgd}
Consider the \aisgd\ procedure \ref{eq:aisgd_averaging}
and suppose that Assumptions \aLinearLoss, \aGamma, \aLipZero, \aLipTwo, \aFisher, and \aErrors\ hold. Then,
\begin{align}
\label{eq:theorem:mse2_aisgd}
(\Ex{||\thetaBar{n}-\thetastar||^2})^{1/2}
\le
& \frac{1}{\sqrt{n}} \left(\mathrm{trace}(\nabla^2 \exloss(\thetastar)^{-1} \Sigma \nabla^2 \exloss(\thetastar)^{-1})\right)^{1/2} \nn \\
& +  \frac{2\gamma+1}{\minEigF^{1/2} \gamma_1}
\Kthree n^{-1+\gamma/2} \nn \\
& + \cTwo \Kfour e^{-\log \lambda \cdot n^{1-\gamma}/2}  \nn \\
& + \frac{\constLipTwo}{2 \minEigF^{1/2}}
\Kone  n^{-\gamma} \nn \\
 & + \cFour \Ktwo K_2(n).
\end{align}
where  $K_2(n) = \sum_{i=1}^n\exp\left(- \log\lambda \cdot i^{1-\gamma}/2\right)$, and constants $\lambda, \epsilon, n_{0,1}, \delta_0, \Gamma^2$ are defined in Theorem \ref{theorem:mse2} (susbtituting $n_0$ for $n_{0,1}$),
and $\zeta_0, n_{0,2}, \Delta^3$ are defined in Theorem \ref{theorem:mse4}, 
substituting ($n_0$ for $n_{0,2}$).
\end{theorem}
}
\newcommand{\TheoremMSEaisgdSHORT}{
\vspace{5px}
\begin{theorem}
\label{theorem:mse2_aisgd_short}
Consider the \aisgd\ procedure \eqAisgd,
and suppose that Assumptions \aGamma, \aLipZero, \aLipTwo, \aFisher, and \aErrors\ hold. Then,
\begin{align}
(\Ex{||\thetaBar{n}-\thetastar||^2})^{1/2}
\le
& \frac{1}{\sqrt{n}} \left(\mathrm{tr}(\nabla^2 \exloss(\thetastar)^{-1} \Sigma \nabla^2 \exloss(\thetastar)^{-1})\right)^{1/2} \nn\\
 & + \bigO{n^{-1+\gamma/2}} + \bigO{n^{-\gamma}} \nn\\
 & + \bigO{\exp(-\log\lambda\cdot n^{1-\gamma}/2}.\nn
\end{align}
\vspace{-30px}
\end{theorem}
}
\DeclareMathOperator*{\argmin}{arg\,min}
\begin{document}

\twocolumn[
\aistatstitle{Towards stability and optimality in stochastic gradient descent}
\aistatsauthor{ Panos Toulis \And Dustin Tran \And Edoardo M. Airoldi }
\aistatsaddress{ Harvard University \And Harvard University \And Harvard University }
]

\begin{abstract}
Iterative procedures for parameter estimation based on \gls{SGD} allow the estimation to scale to massive data sets.
However, in both theory and practice, they suffer from
numerical instability. Moreover, they are statistically inefficient
as estimators of the true parameter value.
To address these two issues, we propose a new iterative procedure termed
\emph{\gls{AISGD}}.
For statistical efficiency, \gls{AISGD}\ employs averaging of the iterates,
which achieves the optimal Cram\'{e}r-Rao bound under strong convexity, i.e., it is an optimal unbiased estimator of the true parameter value.
For numerical stability, \gls{AISGD} employs an implicit update
at each iteration, which is related to proximal operators in optimization.
In practice, \gls{AISGD} achieves competitive performance with other
state-of-the-art procedures. Furthermore,
it is more stable than averaging procedures that do not employ proximal updates, and is simple to implement as it requires fewer tunable hyperparameters than procedures that do employ proximal updates.
\end{abstract}

\section{Introduction}
\label{section:introduction}
The majority of problems in statistical estimation
can be cast as finding the parameter value $\thetastar \in \Theta$ such that
\begin{align}
\label{eq:problem}
\theta_\star = \arg \min_{\theta \in \Theta}  \Ex{\loss(\theta, \xi)},
\end{align}
where the expectation is with respect to
the random variable $\xi \in \Xi \subseteq \Reals{d}$
that represents the data, $\Theta \subseteq \Reals{p}$ is the parameter space, and $\loss : \Theta \times \Xi \to \Reals{}$ is a loss function.
A popular procedure for solving Eq.\eqref{eq:problem}
is \glsreset{SGD}\gls{SGD} \citep{zhang2004solving, bottou2004stochastic}, where
a sequence $\theta_n$ approximates $\thetastar$, and is updated iteratively,
one data point at a time, through the iteration
\begin{align}
\label{eq:sgd_explicit}
\theta_n = \theta_{n-1} - \gamma_n \nabla \loss(\theta_{n-1}, \xi_n),
\end{align}
where $\{\xi_1, \xi_2, \ldots\}$ is a stream of i.i.d. realizations of $\xi$, and $\{\gamma_n\}$ is a non-increasing sequence of positive real numbers, known as the learning rate.
The $n$th iterate $\theta_n$ in \gls{SGD} \eqref{eq:sgd_explicit}
can be viewed as an {\em estimator} of $\thetastar$.
To evaluate such iterative estimators it is typical to consider
three properties: convergence rate and numerical stability, by studying the
mean-squared errors $\Ex{||\theta_n-\thetastar||^2}$; and statistical
efficiency, by studying the limit $n
\Var{\theta_n}$, as $n\to\infty$.

While computationally efficient, the \gls{SGD} procedure \eqref{eq:sgd_explicit}
suffers from numerical instability and statistical inefficiency.
Regarding stability, \sgd\ is sensitive to specification of the learning rate $\gamma_n$, since the mean-squared errors can diverge arbitrarily when $\gamma_n$ is misspecified
with the respect to problem parameters, e.g., the convexity and Lipschitz parameters of the loss function \citep{benveniste1990adaptive, moulines2011non}.
Regarding statistical efficiency,
\gls{SGD} loses statistical information.
In fact, the amount of information loss depends on the misspecification of $\gamma_n$
with respect to the spectral gap of the matrix $\Ex{\nabla^2 L(\thetastar, \xi)}$ \citep{toulis2014statistical},
also known as the Fisher information matrix.
Several solutions have been proposed to resolve these two issues, e.g., using projections and gradient clipping. However, they are usually heuristic and hard to generalize.

In this paper, we aim for the ideal combination of computational efficiency,
numerical stability, and statistical efficiency using the following procedure:
\begin{minipage}{\columnwidth}
\begin{minipage}[!t]{0.2\columnwidth}
\hfill
\textbf{\gls{AISGD}}
\end{minipage}
\begin{minipage}[!t]{0.8\columnwidth}
\centering
\begin{flalign}
\label{eq:implicit}
\theta_n  = \theta_{n-1} - \gamma_n \nabla \loss(\theta_n, \xi_n)\hfill,\\
\label{eq:averaging}
\bar{\theta}_n = (1/n) \sum_{i=1}^n \theta_i.
\end{flalign}
\end{minipage}
\end{minipage}

Our proposed procedure, termed
\glsreset{AISGD}\emph{\gls{AISGD}}, is comprised of two inner procedures.
The first procedure employs updates given in Eq.\eqref{eq:implicit}, which are {\em implicit} because the iterate $\theta_n$ appears on both sides of the equation.
Procedure \eqref{eq:implicit}, also known as {\em implicit SGD} \citep{toulis2014statistical}, aims to stabilize the updates of the classic \sgd\ procedure \eqref{eq:sgd_explicit}.
In fact, implicit \sgd\ can be motivated
as the limit of a sequence of improved classic \sgd\ procedures.
To see this, first fix the sample history $\Fn{n-1} = \{\theta_0^s, \xi_1, \xi_2, \ldots, \xi_{n-1}\}$, where we use the superscript ``s'' in the classic \sgd\ procedure
in order to distinguish from implicit \sgd. Then,  $\theta_n^s = \theta_{n-1}^s - \gamma_n \nabla \loss(\theta_{n-1}^s, \xi_n) \triangleq \theta_{n}^{(1)}$.
If we ``trust'' $\theta_n^{(1)}$ to be a better estimate of $\thetastar$
than $\theta_{n-1}^s$, then we can use $\theta_n^{(1)}$ instead of $\theta_{n-1}^s$ in computing the loss function at data point $\xi_n$.
This leads to a revised update $\theta_n^s = \theta_{n-1}^s - \gamma_n \nabla L(\theta_n^{(1)}, \xi_n) \triangleq \theta_n^{(2)}$. Likewise, we can use $\theta_n^{(2)}$ instead of $\theta_n^{(1)}$, and so on.
If we repeat this argument ad infinitum, then we get
the following sequence of improved \sgd\ procedures,
\begin{align}
\label{eq:motivation}
\theta_n^s & = \theta_{n-1}^s - \gamma_n \nabla L(\theta_{n-1}, \xi_n),\nn\\
\theta_n^s & = \theta_{n-1}^s - \gamma_n \nabla L(\theta_n^{(1)}, \xi_n),\nn\\
\theta_n^s & = \theta_{n-1}^s - \gamma_n \nabla  L(\theta_n^{(2)}, \xi_n), \nn\\
& \ldots \nn\\
\theta_n^s & = \theta_{n-1}^s - \gamma_n \nabla L(\theta_n^{(\infty)}, \xi_n),
\end{align}
where $\theta_n^{(i)} = \theta_{n-1}^s - \nabla L(\theta_n^{(i-1)}, \xi_n)$,
with initial condition $\theta_n^{(0)} = \theta_{n-1}^s$.
In the limit, assuming a unique fixed point is reached almost surely,
the final procedure of sequence \eqref{eq:motivation} satisfies
$\theta_n^s  = \theta_{n-1}^s - \gamma_n \nabla L(\theta_n^{(\infty)}, \xi_n)
 = \theta_n^{(\infty)}$. This can be rewritten as $\theta_n^s = \theta_{n-1}^s - \gamma_n \nabla L(\theta_n^s, \xi_n)$, which is equivalent to implicit \sgd. Thus, implicit \sgd\ can be viewed
as a repeated application of classic \sgd, where we keep
updating the  same iterate $\theta_{n-1}^s$ using the same data point $\xi_n$,
until a fixed-point is reached. Nesterov's accelerated gradient, a
popular improvement of classic \gls{SGD}, is only one
application of this procedure.

The stability improvement achieved by implicit updates
can be motivated by the following argument.
Assume for simplicity that $L$ is strongly convex, almost surely,
with parameter $\mu>0$. Then for the implicit \sgd\ procedure \eqref{eq:implicit},
\begin{align}
 \theta_n + \gamma_n \nabla L(\theta_n, \xi_n) &= \theta_{n-1}, \nn\\
 ||\theta_n-\thetastar||^2 + 2\gamma_n (\theta_n-\thetastar)^\intercal&
\nabla L(\theta_n, \xi_n) \le ||\theta_{n-1}-\thetastar||^2,
\nn\\
 (1+\gamma_n \mu) ||\theta_n-\thetastar||^2  &\le ||\theta_{n-1}-\thetastar||^2,\nn\\
  ||\theta_n-\thetastar||^2  &\le \frac{1}{1+\gamma_n\mu} ||\theta_{n-1}-\thetastar||^2,\nn
\end{align}
which implies that $||\theta_n-\thetastar||^2$ is contracting almost
surely. In contrast, the classic \gls{SGD} procedure does not share
this contracting property.

While the implicit updates of Eq.\eqref{eq:implicit} aim to achieve
stability, the averaging of the iterates in Eq.\eqref{eq:averaging} aims to achieve statistical optimality.
\citet{ruppert1988efficient} gave a nice intuition on why iterate
averaging can lead to statistical optimality. When the learning rate
is $\gamma_n\propto n^{-1}$, then $\thetaBar{n}-\thetastar$ is a weighted average of $n$ error variables $\nabla L(\theta_{i-1}, \xi_i)$,
which therefore are significantly autocorrelated.
However, when $\gamma_n\propto n^{-\gamma}$ with $\gamma \in (0, 1)$, then $\thetaBar{n} - \thetastar$ is the average of $n^\gamma \log n$
error variables, which become uncorrelated in the limit. Thus, averaging improves the estimation accuracy.

\subsection{Related work}
\label{section:related}
The implicit update \eqIsgd\ is equivalent to
\begin{align}
\label{eq:proximal}
\theta_n = \argmin_{\theta \in \Theta}
\left\{ \frac{1}{2\gamma_n} ||\theta-\theta_{n-1}||^2 + \loss(\theta, \xi_n)\right  \}.
\end{align}
Arguably, the first method that used an update
similar to \eqref{eq:proximal} for estimation was the normalized least-mean squares filter of \citet{nagumo1967learning}, used in signal processing.
This update is also used by the {\em incremental proximal method} in optimization \citep{bertsekas2011incremental},
and has shown superior performance to classic \gls{SGD} both in theory and applications \citep{bertsekas2011incremental, toulis2014statistical, defossez2015averaged, toulis2015implicit}.
In particular, implicit updates lead to similar convergence rates as classic \gls{SGD} updates, but are significantly more stable.
This stability can also be motivated from a
Bayesian interpretation  of Eq.\eqref{eq:proximal}, where $\theta_n$ is
the posterior mode of a model with the standard multivariate normal $\mathcal{N}(\theta_{n-1}, \gamma_n I)$ as the prior,
$L(\theta, \cdot)$ as the log-likelihood,
and $\xi_n$ as the observation.

A statistical analysis of procedure \eqIsgd\ without averaging
was done by \citet{toulis2014statistical} who derived the asymptotic variance
$\Var{\theta_n}$ of $\theta_n$, and provided an algorithm to efficiently solve the fixed-point equation \eqIsgd\ for $\theta_n$ in the family of generalized linear models, which we generalize in this current work.
In the online learning literature, \citet{kivinen2006p} and
\citet{kulis2010implicit} have also analyzed implicit updates;
\citet{schuurmans2007implicit} have
further  applied implicit procedures on learning with kernels.
Notably the implicit update \eqref{eq:proximal} is related to the
importance weight updates proposed by \citet{karampatziakis2010online}, but the two
update forms are not equivalent, and are usually combined in practice \citep[Section 5]{karampatziakis2010online}.

Assuming that the expected loss $\exloss$ is known,
instead of update \eqref{eq:proximal} we could use the update
\begin{align}
\label{eq:proximal2}
\theta_n^+ = \arg \min_{\theta \in \Theta}
\left\{ \frac{1}{2\gamma_n} ||\theta-\theta_{n-1}||^2 + \exloss(\theta)\right  \}.
\end{align}

In optimization, this mapping from $\theta_{n-1}$ to $\theta_n^+$ in
Eq. \eqref{eq:proximal2} is known as a {\em proximal operator},
and is a special instance of the proximal point algorithm \citep{rockafellar1976monotone}.
Thus implicit \gls{SGD} involves mappings that are stochastic versions
of mappings from proximal operators.
The stochastic proximal gradient algorithm \citep{singer2009efficient, parikh2013proximal, rosasco2014convergence} is related but
different to implicit \gls{SGD}.
In contrast to implicit \gls{SGD}, the stochastic proximal gradient algorithm
first makes a classic \sgd\ update (forward step), and then an
implicit update (backward step). Only the forward step is stochastic
whereas the backward proximal step is not. This may increase convergence speed but may also introduce instability due to the forward step.

Interest on proximal operators has surged in recent years
because they are non-expansive and converge with minimal assumptions. Furthermore, they can be applied on non-smooth
objectives, and can easily be combined
in modular algorithms for optimization in large-scale and distributed
settings \citep{parikh2013proximal}.
The idea has also been generalized through splitting algorithms \citep{lions1979splitting, beck2009fast, singer2009efficient, duchi2011adaptive}.
\citet{krakowskigeometric} and
\citet{nemirovski2009robust} have shown that proximal methods
can fit better in the geometry of the parameter space $\Theta$, and
\citet{toulis2014implicit} have made a connection  to shrinkage methods in statistics.

Two recent procedures based on stochastic proximal updates are \proxsvrg\ \citep{xiao2014proximal}
and \proxsag\ \citep[Section 6]{schmidt2013minimizing}.
The main idea in both methods is to periodically compute an estimate of the
full gradient averaged over all data points in order to reduce the variance of
stochastic gradients. This requires a finite data setting, whereas \gls{AISGD}
also applies to streaming data.
Moreover, the periodic calculations in \proxsvrg\ are controlled by
additional hyperparameters, and the periodic calculations in
\proxsag\ require storage of the full gradient at every iteration.
\gls{AISGD} differs because it employs averaging to achieve statistical
efficiency, has no additional hyperparameters or major storage requirements,
and thus it has a simpler implementation.

Averaging of the iterates in Eq.\eqAisgd\ is the other key component of \gls{AISGD}. Averaging was proposed
and analyzed in the stochastic approximation literature by \citet{ruppert1988efficient} and
\citet{bather1989stochastic}. \citet{polyak1992acceleration} substantially
expanded the scope of the averaging method by proving asymptotic optimality of the classic \gls{SGD} procedure with averaging, under suitable assumptions.
Their results showed clearly that slowly-convergent stochastic approximations  (achieved
when the learning rates are large)
need to be averaged.
Recent work has analyzed classic \sgd\ with averaging
\citep{zhang2004solving, xu2011towards, shamir2012stochastic, bach2013non} and has shown their superiority in numerous learning tasks.

\subsection{Overview of results}
\label{section:overview}
In this paper, we study the iterates $\theta_n$ and use the results
to study $\thetaBar{n}$ as an estimator of $\thetastar$.
Under strong convexity of the expected loss,
we derive upper bounds for the squared errors
$\Ex{||\theta_n-\thetastar||^2}$ and $\Ex{\thetaBar{n}-\thetastar||^2}$
in Theorem \ref{theorem:mse2} and Theorem \ref{theorem:mse2_aisgd_short},
respectively. In the supplementary material,
we also give bounds for $\Ex{||\theta_n-\thetastar||^4}$.

Two main results are derived from our theoretical analysis.
First, $\thetaBar{n}$ achieves the Cram\'{e}r-Rao bound, i.e., no other unbiased estimator of $\theta_\star$ can do better in the limit, which is equivalent to the optimal $\bigO{1/n}$ rate of convergence for first-order procedures.
Second, \gls{AISGD} is significantly more stable to
misspecification of the learning rate relative to classic averaged \gls{SGD} procedures, with
respect to the learning problem parameters, e.g.,  convexity and Lipschitz constants.
Finally, we perform experiments on several standard machine learning
tasks, which show that \gls{AISGD} comes closer to combining
stability, optimality, and simplicity than other competing methods.

\section{Preliminaries}
\label{section:preliminaries}

\parhead{Notation.}
\StateAssumptions
\remark\
\assumeLinearLoss\ puts a constraint on the loss function,
but it is not very restrictive because the majority of machine
learning models indeed depend on the parameter $\theta$ through a linear
combination with features.
A notable exception includes loss functions with a regularization term.
Although it is easy to add regularization to \aisgd\ we will not do so in this paper because \gls{AISGD} works well without it, since
the proximal operator \eqref{eq:proximal} already regularizes the estimate $\theta_n$ towards $\theta_{n-1}$. In experiments, regularization neither improved nor worsened \gls{AISGD} (see
supplementary material for more details).
\assumeGamma\ on learning rates and \assumeErrors\ are standard in the literature of stochastic approximations, dating
back to the original paper of \citet{robbins1951stochastic} in the
one-dimensional parameter case.

Assumptions on Lipschitz gradients (\assumeLipOne, \assumeLipTwo) can be relaxed; for example, \citet{benveniste1990adaptive}
relax this assumption using  $||\theta_1-\theta_2||^q$.
However, these two Lipschitz conditions are commonly used
in order to simplify the non-asymptotic analysis \citep{moulines2011non}.
\assumeLipZero\ is less standard in classic \sgd\ literature
but has so far been standard in the limited literature on implicit
\sgd\ \citep{bertsekas2011incremental}. We can forgo this assumption and still maintain identical rates for
the errors, although at the expense of a more complicated analysis.
It is also an open problem whether a nice stability result
similar to Theorem \ref{theorem:mse2} can be derived under
\assumeLipOne\ instead of \assumeLipZero.
We discuss this issue after the proof of Theorem \ref{theorem:mse2} in the supplementary material.

\assumeFisher\ makes two claims. The first claim on the observed Fisher information matrix
is a relaxed form of strong convexity for the loss $L(\theta, \xi)$. However, in contrast
to strong convexity, this claim allows several eigenvalues of $\nabla^2 L$ to be zero.
The second claim of \assumeFisher\ is equivalent to strong convexity of the
expected loss $\exloss(\theta)$.
From a statistical perspective, strong convexity posits that there is
information in the data for all elements of $\thetastar$.
This assumption is necessary
to derive bounds on the errors $\Ex{||\theta_n-\thetastar||^2}$, and
has been used to show optimality of classic
\sgd\ with averaging \citep{polyak1992acceleration, ljung1992stochastic,
xu2011towards, moulines2011non}.

Overall, our assumptions are weaker than the assumptions
in the limited literature on implicit \sgd. For example,
\citet[Assumptions 3.1, 3.2]{bertsekas2011incremental} assumes
almost-sure bounded gradients $\nabla L(\theta, \xi)$ in addition to
\assumeLipZero; \citet{ryustochastic} assume strong convexity of $L(\theta, \xi)$, in expectation, which can simplify the analysis significantly. We discuss more details in the supplementary material after the proof of Theorem \ref{theorem:mse2}.

\section{Theory}
\label{section:theory}

In this section we present our theoretical
analysis of \gls{AISGD}.
All proofs are given in the supplementary
material.
The main technical challenge in analyzing implicit \gls{SGD} \eqIsgd\
is that
unlike typical analysis with classic \gls{SGD} \eqref{eq:sgd_explicit},
the error $\xi_n$ is not conditionally independent of $\theta_n$.
This implies that $\ExCond{\nabla \loss(\theta_n, \xi_n)}{\theta_n} \neq
\exloss(\theta_n)$,
which makes it no longer possible to use the convexity properties of $\exloss$ to
analyze the errors $\Ex{||\theta_n-\thetastar||^2}$, as it is common in the
literature.

As mentioned earlier, to circumvent this issue other authors have made strict almost-sure assumptions on the implicit procedure \eqIsgd \citep{bertsekas2011incremental, ryustochastic}.
In this paper, we rely on weaker conditions, namely the
Lipschitz assumptions \aLipZero-\aLipTwo,
which are also used in non-implicit procedures.
Our proof strategy relies on a master lemma (Lemma 3 in supplementary
material) for the analysis of recursions that appear to be typical in implicit
procedures. This result is novel to our best knowledge, and it can be useful
in future research on implicit procedures.

\subsection{Computational efficiency}
Our first result enables efficient computation of the implicit update \eqIsgd.
In general, this can be expensive due to
solving a fixed-point equation in many dimensions, at every iteration.
We reduce this multi-dimensional equation to an equation
of only one dimension.
Furthermore, under almost-sure convexity of the loss function,
efficient search bounds for the one-dimensional
fixed-point equation are available. This result generalizes an earlier result in
efficient computation of implicit updates on generalized linear models \citep[Algorithm 1]{toulis2014statistical}.

\DefinitionDerivative\
\LemmaImplicit

\remark\
Lemma \ref{lemma:implicit_algo} has two parts.
First, it shows that the implicit update can be performed by obtaining $s_n$
from the fixed-point Eq.\eqref{eq:lemma_scaling}, and then using $
\nabla \loss(\theta_n, \xi_n) = s_n \nabla \loss(\theta_{n-1}, \xi_n)$
in the implicit update \eqIsgd. The fixed-point equation
can be solved through a numerical root-finding procedure \citep{kivinen2006p, kulis2010implicit, toulis2014statistical}.
Second, when the loss function is convex, then
narrow search bounds for $s_n$ are available.
This property holds, for example, when the loss function is the
negative log-likelihood in an exponential family.

\subsection{Non-asymptotic analysis}
Our next result is on the mean-squared errors $\Ex{||\theta_n-\thetastar||^2}$. These errors show the stability and convergence rates of implicit \sgd\, and are used in combination with bounds
on errors $\Ex{||\theta_n-\thetastar||^4}$ to
derive bounds on the errors $\Ex{||\thetaBar{n}-\thetastar||^2}$
of the averaged procedure.\footnote{
The bounds for the fourth moments
$\Ex{||\theta_n-\thetastar||^4}$ are given in the supplementary material because they rely on the same intermediate results as $\Ex{||\theta_n-\thetastar||^2}$.}

\TheoremMSE\

\remark\
According to Theorem \ref{theorem:mse2},
the convergence rate of the implicit iterates $\theta_n$
is $\bigO{n^{-\gamma}}$. This matches
earlier results on rates of classic \sgd\ \citep{benveniste1990adaptive, moulines2011non}. The most important difference, however,
is that the implicit procedure discounts the initial conditions
$\delta_0$ at an exponential rate, regardless of the specification
of the learning rate. As shown by \citet[Theorem 1]{moulines2011non}, in
classic \gls{SGD} there exists
a term $\exp(\constLipOne^2 \gamma_1^2 n^{1-2\gamma})$
in front of the initial conditions, which can be catastrophic
if the learning rate parameter $\gamma_1$ is misspecified.
In contrast, the implicit iterates are unconditionally stable,
i.e., any specification of the learning rate will lead to a
stable discounting of the initial conditions.

\TheoremMSEaisgdSHORT\

\remark\
The full version of Theorem \ref{theorem:mse2_aisgd_short}, which
includes all constants, is given in the supplementary material.
Even in its shortened form, Theorem \ref{theorem:mse2_aisgd_short} delivers three main results.
First, the iterates $\thetaBar{n}$ attain the Cram\'{e}r-Rao lower bound, i.e., any other unbiased estimator of $\theta_\star$ cannot have lower
MSE than $\thetaBar{n}$. From an optimization
perspective, $\thetaBar{n}$ attains the rate $\bigO{1/n}$,
which is optimal for first-order methods \citep{nesterov2004introductory}.
This result matches the asymptotic optimality of averaged iterates from
classic \gls{SGD} procedures, which has been proven by \citet{polyak1992acceleration}.

Second, the remaining rates are $\bigO{n^{-2+\gamma}}$
and $\bigO{n^{-2\gamma}}$. This implies the optimal
choice $\gamma=2/3$ for the exponent of the learning rate.
It extends the results of \citet{ruppert1988efficient}, and more recently by
\citet{xu2011towards}, and \citet{moulines2011non}, on optimal exponents
for classic \gls{SGD} procedures.

Third, as with non-averaged implicit iterates in Theorem \ref{theorem:mse2}, the averaged iterates $\thetaBar{n}$ have a decay of the initial conditions regardless
of the specification of the learning rate parameter.
This stability property is inherited from the underlying implicit \gls{SGD} procedure \eqIsgd\
that is being averaged. In contrast, averaged iterates of classic \gls{SGD} procedures can diverge numerically because arbitrarily large terms can appear in front of  initial conditions \citep[Theorem 3]{moulines2011non}.

\section{Experiments}
\label{section:experiments}
In this section, we show that \aisgd\ achieves comparable,
and sometimes superior, results to other methods while combining statistical efficiency, stability,
and simplicity. In our experiments, we compare our procedure to the following procedures:

\begin{itemize}
\item \gls{SGD}: Classic stochastic gradient descent in its standard formulation
\citep{sakrison1965efficient, zhang2004solving}, which employs the
update $\theta_n =
\theta_{n-1}- \gamma_n \nabla \loss(\theta_{n-1}, \xi_n)$.
\item \implicit: Stochastic gradient descent procedure introduced in
\citet{toulis2014statistical} which employs implicit update \eqIsgd\ without
averaging. It is robust to misspecification of the learning rate but also exhibits slower convergence
in practice relative to classic \sgd.
\item \asgd: Averaged stochastic gradient descent procedure with
classic updates of
the iterates \citep{xu2011towards, shamir2012stochastic, bach2013non}. This is equivalent to \gls{AISGD} where the update \eqIsgd\
is replaced by the classic step $\theta_n = \theta_{n-1}- \gamma_n
\nabla\loss(\theta_{n-1}, \xi_n)$.
\item \proxsvrg: A proximal version of the stochastic gradient descent procedure with progressive variance reduction (SVRG) \citep{xiao2014proximal}.
\item \proxsag:  A proximal version of the stochastic average gradient (SAG)
procedure \citep{schmidt2013minimizing}.  While its theory has not been formally
established, \proxsag\ has shown similar convergence properties to
\proxsvrg in practice.
\item \adagrad: A stochastic gradient descent procedure with a form of diagonal scaling to adapt the learning rate \citep{duchi2011adaptive}.
\end{itemize}

Note that \proxsvrg\ and \proxsag\ are applicable only to fixed data sets
and not to the streaming setting.
Therefore the theoretical linear convergence rate of these methods
refers to convergence to an empirical minimizer (e.g., maximum
likelihood, or maximum a-posteriori if there is regularization), and not to the ground truth $\theta_\star$.
On the other hand, \aisgd can be applied to both data settings.

We also note that \adagrad, and similar adaptive schedules,
\citep{tieleman2012lecture,kingma2015adam} effectively approximate
the natural gradient $\Fisher{\theta}^{-1} \nabla L(\theta, \xi)$ by using a
multi-dimensional learning rate. These learning rates have the added advantage
of being less sensitive than one-dimensional rates to tuning of
hyperparameters; they can be combined in practice with \gls{AISGD}.

\subsection{Statistical efficiency and stability}
\label{section:statistical}
We first demonstrate the theoretical results on the stability and statistical
optimality of \aisgd. To do so, we follow a simple normal linear regression example from \citet{bach2013non}.
Let $N=10^6$ be the number of observations, and $p=20$
be the number of features. Let $\thetastar=(0,0,\ldots,0)^\intercal$ be the ground truth. The random variable $\xi$ is decomposed
as $\xi_n = (x_n, y_n)$, where the feature vectors $x_1,\ldots,x_N \sim
\mathcal{N}_p(0, H)$ are i.i.d. normal random variables,
and $H$ is a randomly generated symmetric matrix with eigenvalues $1/k$, for $k=1,\ldots, p$.
The outcome $y_n$ is sampled from a normal distribution as $y_n\:\vert\: x_n\sim \mathcal{N}(x_n^\intercal\theta_*, 1)$, for $n=1,\ldots,N$. Our loss function is defined as the squared residual, i.e.,
$\loss(\theta, \xi_n) = (y_n - x_n^\intercal \theta)^2$,
and thus $\exloss(\theta) = \Ex{\loss(\theta, \xi)} = (\theta - \theta_\star)^\intercal H (\theta - \theta_\star)$.

We choose a constant learning rate $\gamma_n \equiv \gamma$ according to the average radius of the data $R^2=\mathrm{trace}(H)$, and for both \asgd\ and \gls{AISGD} we collect iterates $\theta_n$, $n=1,\ldots,N$, and keep the average $\thetaBar{n}$.
In Figure \ref{figure:normal}, we plot $\exloss(\thetaBar{n})$ for each iteration for a maximum of $N$ iterations in log-log space.
\begin{figure}[!htb]
\centering
\includegraphics[width=\columnwidth]{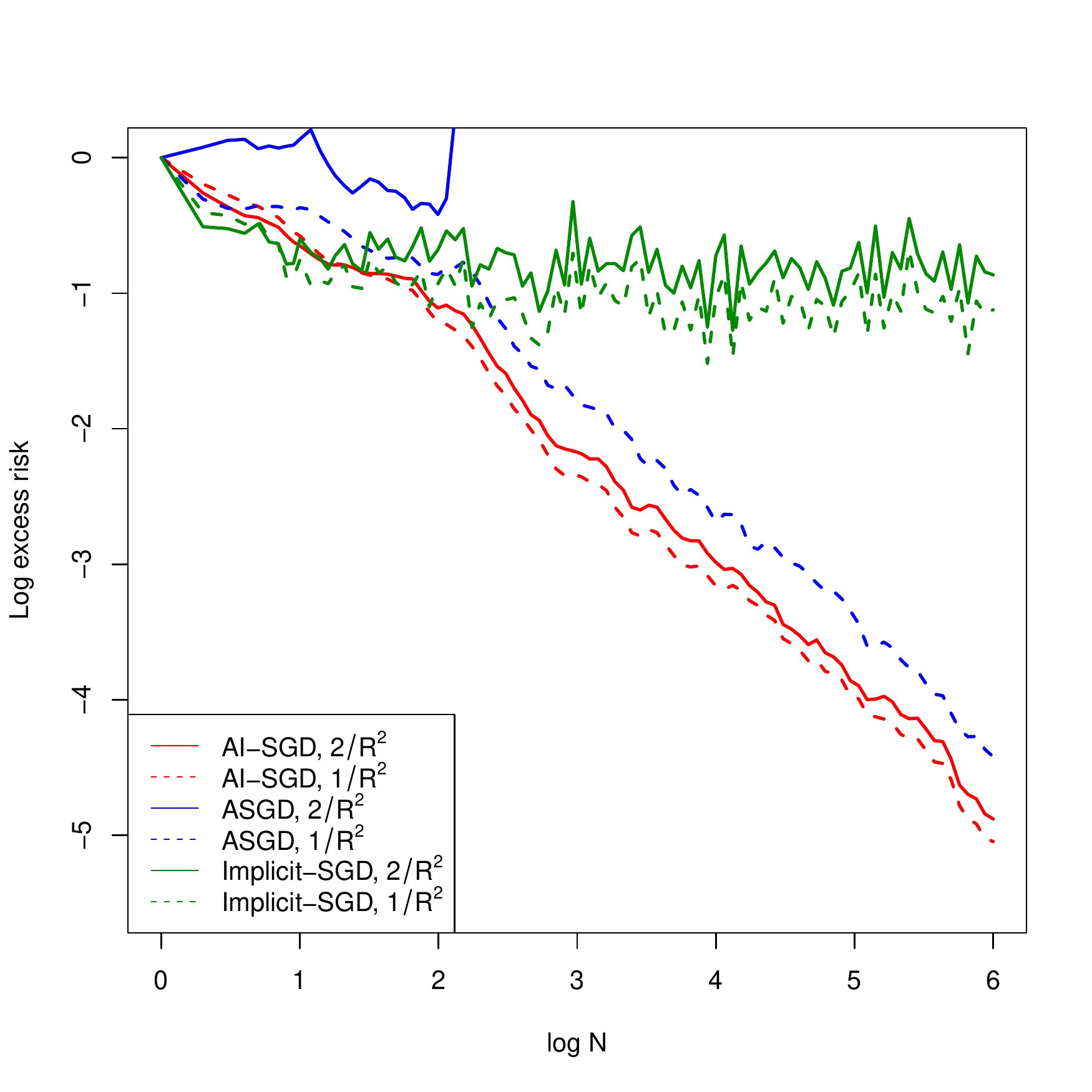}
\caption{Loss of \aisgd, \asgd, and \implicit, on simulated multivariate normal
data with $N=10^6$ observations, $d=20$ features.
The plot shows that \gls{AISGD} achieves
stability regardless of the specification of the learning rate
$\gamma_n\equiv\gamma$.
In contrast, \asgd\ diverges when the learning rate is only slightly misspecified (e.g., solid, blue line).
}
\label{figure:normal}
\end{figure}

Figure \ref{figure:normal} shows that \gls{AISGD} performs on par with \asgd\ for
the rates at which \asgd\ is known to be optimal.  However, the benefit of the
implicit procedure \eqIsgd\ in \gls{AISGD} becomes clear as the learning rate increases.
Notably, \gls{AISGD} remains stable for learning rates that are above the
theoretical threshold, i.e., when $\gamma > 1/R^2$, whereas \asgd\ diverges above that
threshold, e.g., when $\gamma=2/R^2$. This stable behavior is also exhibited in \implicit, but
\implicit\ converges at a slower rate than \gls{AISGD}, and thus does
not combine stability with statistical efficiency. This behavior is also
reflected for \gls{AISGD} when using decaying learning rates, e.g., $\gamma_n
\propto 1/n$.

\subsection{Classification error}
\label{section:classification}
We now conduct a study of \aisgd's empirical performance on standard benchmarks
of large-scale linear classification. For brevity, we display results on four
data sets, although we have seen similar results on eight additional ones (see
the supplementary material for more details).

Table \ref{datasets} displays a summary of the data sets.
The {COVTYPE} data
set \citep{blackard1998comparison}
consists of forest cover types in which the task is to classify class 2 among 7
forest cover types.
{DELTA} is synthetic data offered in the PASCAL Large Scale Challenge
\citep{sonnenburg2008pascal} and we apply the default processing offered by the
challenge organizers.
The task in {RCV1} is to classify documents belonging to class {CCAT} in the
text dataset \citep{lewis2004rcv1}, where we apply the standard preprocessing
provided by \citet{bottou2012stochastic}.  In the MNIST data set
\citep{lecun1998gradient} of images of handwritten digits, the task is to
classify digit 9 against all others.

For \gls{AISGD} and \asgd, we use the learning
rate $\gamma_n = \eta_0(1+\eta_0 n)^{-3/4}$ prescribed in
\citet{xu2011towards}, where the constant $\eta_0$ is determined through
preprocessing on a small
subset of the data.  Hyperparameters for other methods are set based on a
computationally intensive grid search over the entire hyperparameter space:
this includes step sizes for \proxsag, \proxsvrg, and \adagrad,
and the inner iteration count for \proxsvrg.
For all methods we use $L_2$ regularization with parameter $\lambda$ which
varies for each data set, and which is also used in \citet{xu2011towards}.

\begin{figure}[t!]
\hspace{-2em}
\includegraphics[width=1.1\columnwidth]{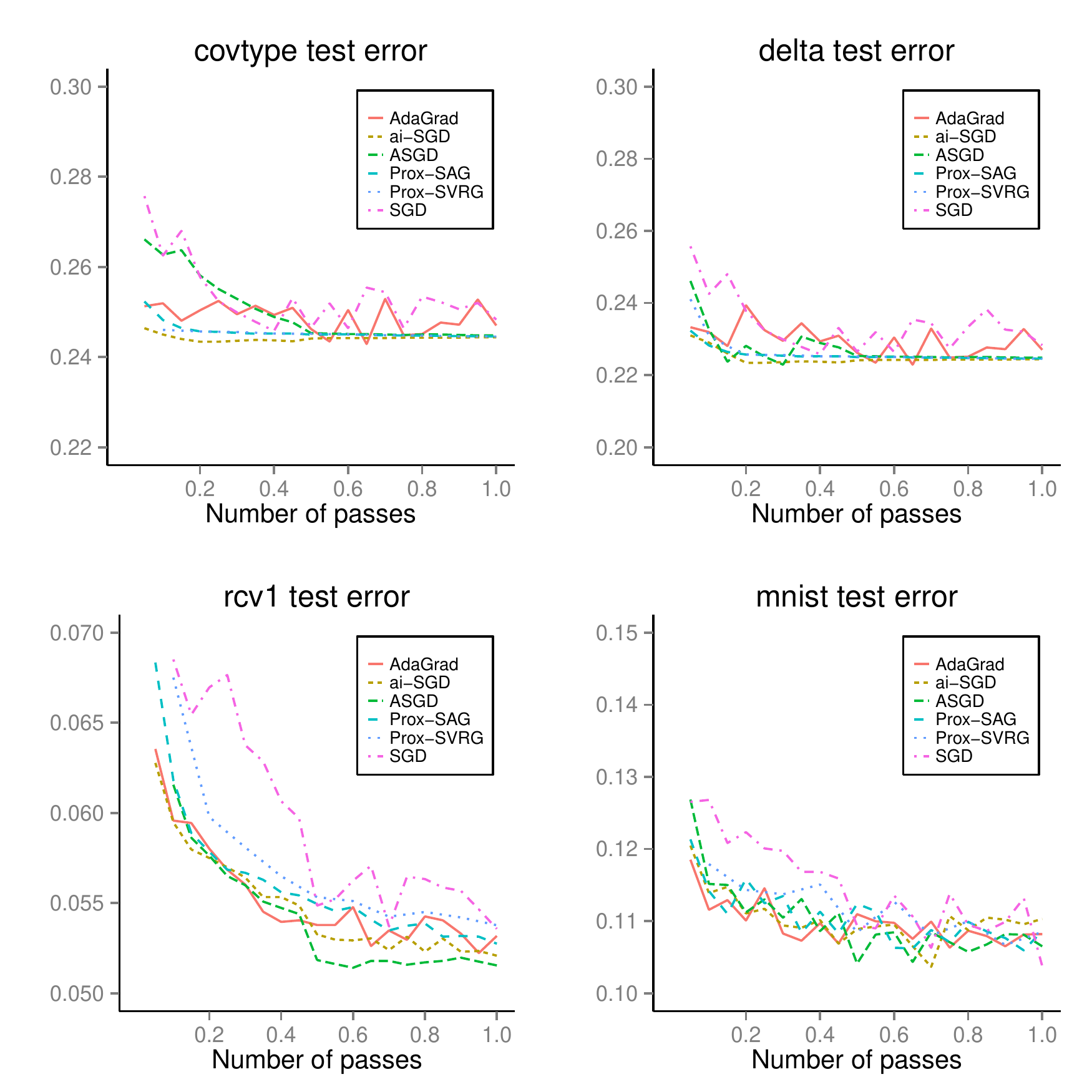}
\caption{Large scale linear classification with log loss on four data sets. Each
plot indicates the test error of various stochastic gradient methods over a single
pass of the data.}
\label{figure:classification}
\end{figure}
\begin{table*}[!htb]
  \centering
\begin{tabular}{|l|l|l|l|l|l|l|l|l|l|l|}
\hline
       & description          & type   & features& training set & test set & $\lambda$\\
\hline
covtype& forest cover type    & sparse & 54      & 464,809 & 116,203 & $10^{-6}$\\
delta  & synthetic data       & dense  & 500     & 450,000 & 50,000     & $10^{-2}$\\
rcv1   & text data            & sparse & 47,152  & 781,265 & 23,149  & $10^{-5}$\\
mnist  & digit image features & dense  & 784     & 60,000  & 10,000  & $10^{-3}$\\
\hline
\end{tabular}
\caption{\label{datasets} Summary of data sets and the $L_2$ regularization
parameter, following the settings in \citet{xu2011towards}.
}
\end{table*}

The results are shown in Figure \ref{figure:classification}.
We see that \gls{AISGD} achieves comparable performance with the tuned proximal
methods \proxsvrg\ and \proxsag, as well as \adagrad.
All methods have a comparable convergence
rate and take roughly a single pass in order to converge.
Interestingly,
\adagrad\ exhibits a larger variance in its estimate than the proximal methods. This comes from the less known fact that the learning rate in \adagrad\ is a suboptimal approximation of the Fisher information, and hence it is statistically inefficient.

\subsection{Sensitivity analysis}
\label{section:sensitivity}
We examine the inherent stability of the aforementioned procedures by perturbing their hyperparameters. That is, we perform
sensitivity analysis by varying any hyperparameters that the user must tweak in order to fine tune the convergence of each procedure.
We do so for hyperparameters in \asgd\ (the learning rate), \proxsvrg\ (proximal step size
$\eta$ and inner iteration $m$), and \aisgd (the learning rate).

\begin{figure}[!htb]
\centering
\includegraphics[width=0.7\columnwidth]{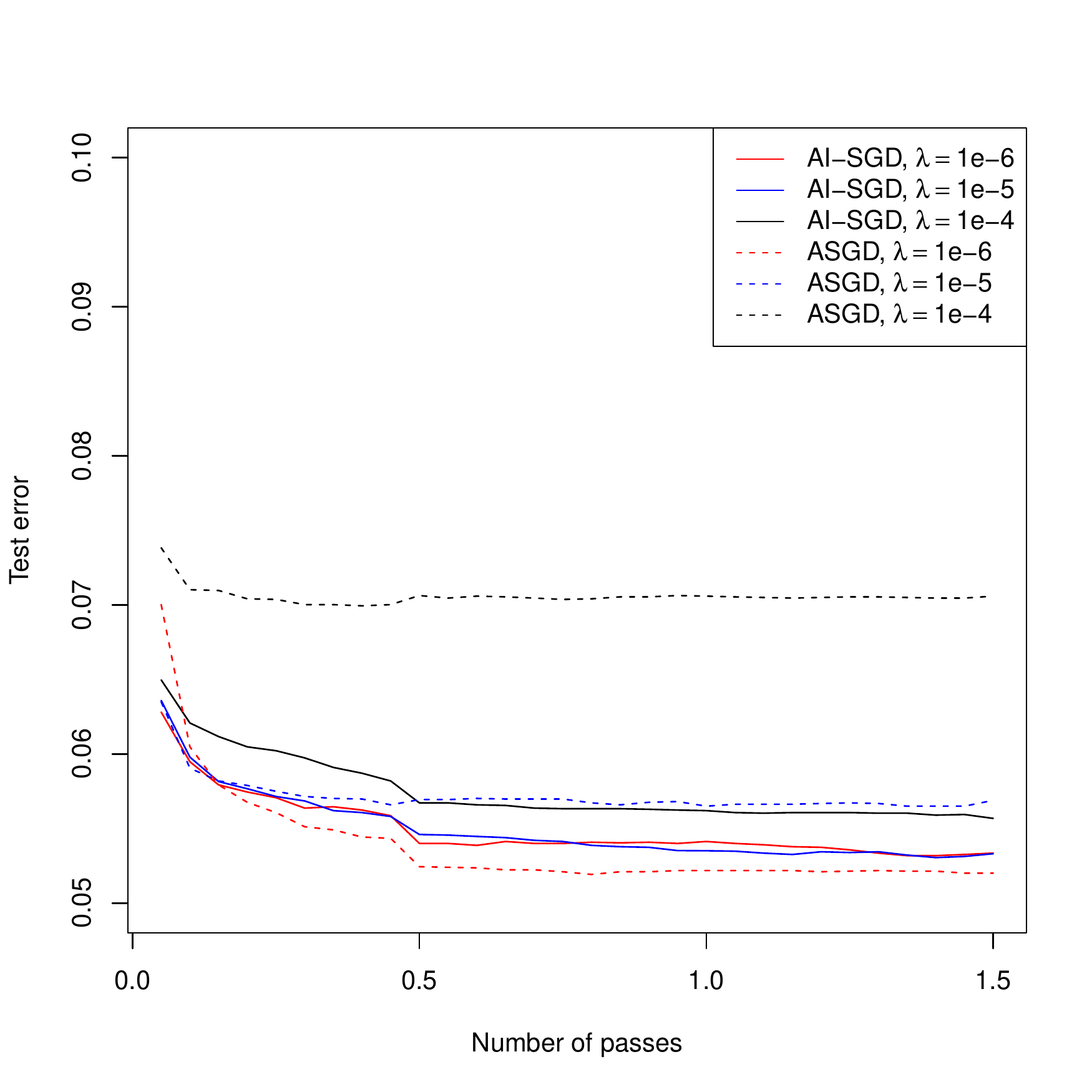}
\\[-4ex]
\includegraphics[width=0.7\columnwidth]{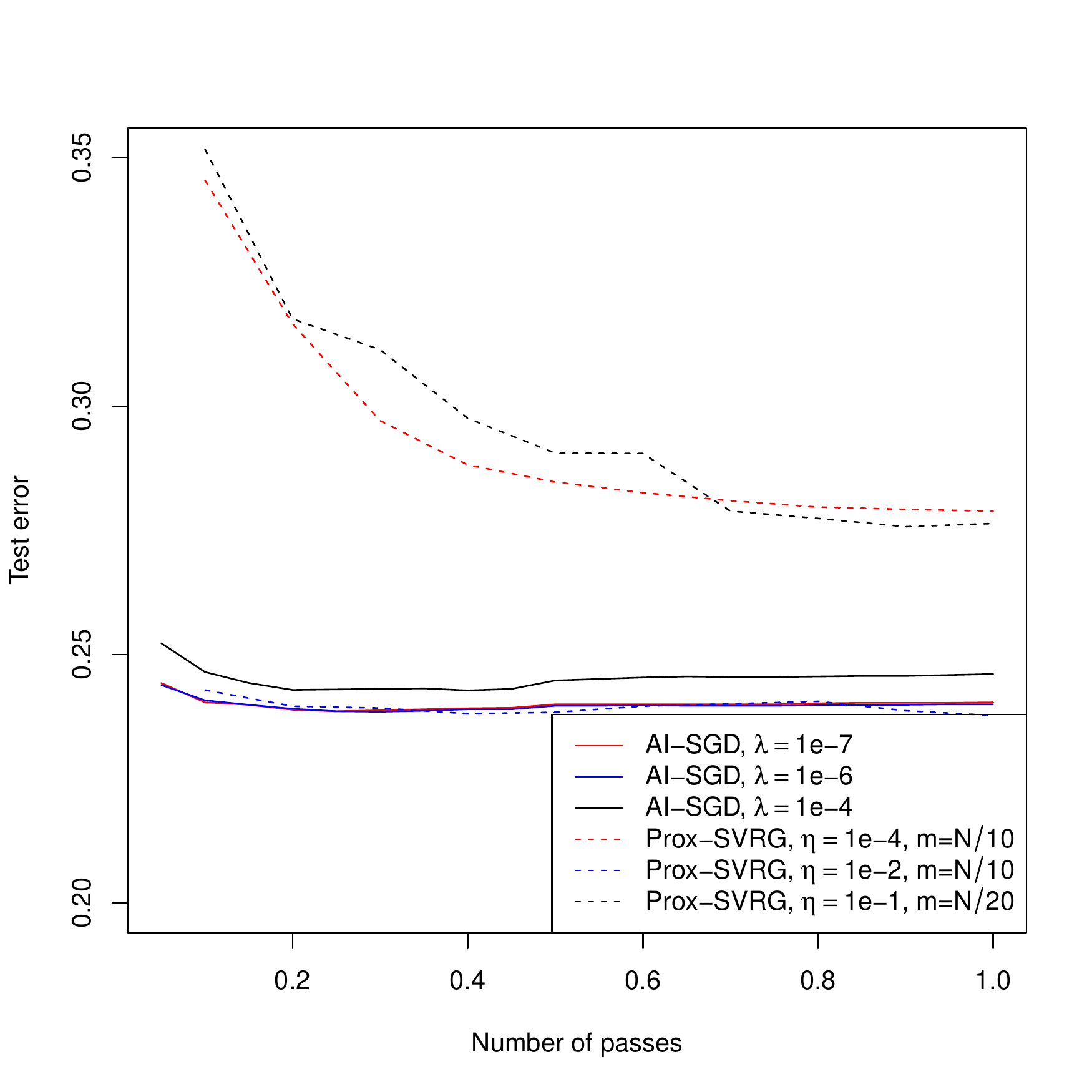}
\caption{Top: Logistic regression on the $\mathrm{RCV1}$ dataset,
performing sensitivity analysis of \aisgd\ and \asgd\ for the choice of
regularization parameter $\lambda$. Bottom: linear SVM on the
$\mathrm{covtype}$ dataset, performing sensitivity analysis of \aisgd\ and
\proxsvrg, in which \proxsvrg\ has additional hyperparameters $\eta$ according
to the step size of the proximal update and $m$ according to the inner
iteration count.}
\label{figure:sensitivity}
\end{figure}

The results are shown in Figure \ref{figure:sensitivity}.
When we decrease the regularization parameter, \asgd\ performs increasingly worse.  While it may converge, the test
error can be arbitrarily large. On the other hand, \aisgd\ always achieves convergence and is not affected by the choice of the hyperparameter.
When the regularization parameter is about $1/N$, e.g., when $\lambda <
1\eminus6$, \asgd\ remains stable and achieves the same performance as \aisgd.
Similar results hold when perturbing the hyperparameters $\eta$ and $m$ in
\proxsvrg, as \aisgd\ does not require specification of such hyperparameters.

\section{Conclusion}
\label{section:conclusion}
We propose a statistical learning procedure, termed \aisgd, and investigate its theoretical and empirical properties.
\gls{AISGD} combines simple stochastic proximal steps, also known as implicit updates, with iterate averaging and larger step-sizes.
The proximal steps allow \gls{AISGD} to be significantly more stable compared to classic \gls{SGD} procedures, with or without averaging of the iterates; this stability comes at virtually no computational cost for a large family of
machine learning models.
Furthermore, the averaging of the iterates lead \gls{AISGD} to be statistically optimal, i.e., the variance of the iterate $\thetaBar{n}$ of \aisgd\  achieves the minimum Cram\'{e}r-Rao lower bound, under strong convexity.
Last but not least, \gls{AISGD} is as simple to implement as classic
\gls{SGD}.  In comparison, other stochastic proximal procedures, such as
\proxsvrg\ or \proxsag, require tuning of
hyperparameters that control periodic calculations over the entire
dataset, and possibly storage of the full gradient.

\section*{References}
\renewcommand{\bibsection}{}
\bibliographystyle{apalike}
\bibliography{aistats2016}

\clearpage
\appendix
\onecolumn
\section{Note}
Lemmas 1, 2, 3 and 4, and Corollary 1, were originally derived by \citet{toulis2014implicit}. These intermediate results (and Theorem 1) provide the necessary foundation to derive Lemma 5 (only in this supplement) and Theorem 2 on the asymptotic optimality of $\thetaBar{n}$, which is the key result of the main paper.
We fully state these intermediate results here for convenience but we point the reader to the aforementioned reference for the proofs and for more details on the theory of (non-averaged) implicit stochastic gradient descent (implicit SGD).

\section{Introduction}
\label{section:introduction}
Consider a random variable $\xi \in \Xi$,
a parameter space $\Theta$
that is convex and compact,
and a loss function $\loss : \Theta \times \Xi \to \Reals{}$.
We wish to solve the following stochastic optimization problem:
\begin{align}
\label{eq:problem}
\theta_\star = \arg \min_{\theta \in \Theta}  \Ex{\loss(\theta, \xi)},
\end{align}
where the expectation is with respect to $\xi$.
Define the expected loss,
\begin{align}
\label{eq:exloss}
\exloss(\theta)  = \Ex{\loss(\theta, \xi)},
\end{align}
where $\loss$ is differentiable almost-surely.
In this work we study a stochastic approximation procedure to solve \eqref{eq:problem} defined through the iterations
\begin{align}
\label{eq:aisgd_implicit}
& \boldsymbol{\theta_n} = \theta_{n-1} -\gamma_n \nabla \loss(\boldsymbol{\theta_n}, \xi_n),
	\hspace{5px} \theta_0 \in \Theta, \\
\label{eq:aisgd_averaging}
& \bar{\theta}_n = \frac{1}{n} \sum_{i=1}^n \theta_i,
\end{align}
where $\{\xi_1, \xi_2, \ldots\}$ are i.i.d. realizations of $\xi$,
and $\nabla \loss(\theta, \xi_n)$ is the gradient of the loss function with respect to $\theta$ given realized value $\xi_n$.
The sequence $\{\gamma_n\}$ is a non-increasing sequence of positive real numbers.
We will refer to procedure
defined by \eqref{eq:aisgd_implicit} and \eqref{eq:aisgd_averaging} as \emph{averaged implicit stochastic gradient descent}, or \aisgd\ for short.
Procedure \aisgd\ combines two ideas, namely an implicit update in Eq. \eqref{eq:aisgd_implicit} as $\theta_n$ appears on
both sides of the update, and averaging of the iterates $\theta_n$ in Eq. \eqref{eq:aisgd_averaging}.

\renewcommand{\eqIsgd}{\eqref{eq:aisgd_implicit}}
\renewcommand{\eqAisgd}{\eqref{eq:aisgd_averaging}}

\section{Notation and assumptions}
\StateAssumptions

\section{Proof of Lemma \ref{lemma:implicit_algo}}
\DefinitionDerivative\
\LemmaImplicit\
\if0
\begin{proof}
Consider $\xi_n=(x_n, y_n)$ according to \assumeLinearLoss.
It holds,
\begin{align}
\label{lemma:eq1}
\nabla \loss(\theta, \xi_n) & \defeq \nabla L(x_n^\intercal\theta, y_n)
\commentEq{by \assumeLinearLoss}\nn\\
& = L'(\theta, \xi_n) x_n.
\commentEq{by chain rule and Definition \ref{definition:derivative}}
\end{align}
Therefore, the gradient at each point $\theta$
has a direction that only depends on $x_n$, which is parameter-free.
Thus,
\begin{align}
\label{lemma:eq2}
\nabla L(\theta_n, \xi_n) = s_n \nabla L(\theta_{n-1}, \xi_n),
\end{align}
for some scalar $s_n$ since the two gradients are colinear
regardless of the parameter $\theta$.
It follows that
\begin{align}
\label{implicit_algo:eq2}
\theta_{n} & = \theta_{n-1} - \nabla \loss(\theta_n, \xi_n)
\commentEq{by definition of implicit SGD \eqref{eq:aisgd_implicit}}\nn\\
 & = \theta_{n-1} - \gamma_n s_n  \nabla \loss(\theta_{n-1}, \xi_n).
 \commentEq{by Eq.\eqref{lemma:eq2}}
\end{align}

\noindent Starting from Eq.\eqref{lemma:eq2}, we now proceed as follows,
\begin{align}
\label{lemma:eq3}
\nabla L(\theta_n, \xi_n) & = s_n \nabla L(\theta_{n-1}, \xi_n)\nn\\
L'(\theta_n, \xi_n) x_n & = s_n L'(\theta_{n-1}, \xi_n) x_n
\commentEq{by Eq.\eqref{lemma:eq1}}\nn\\
L'(\theta_n, \xi_n) & = s_n L'(\theta_{n-1}, \xi_n) \nn\\
L'(\theta_{n-1} - \gamma_n s_n  \nabla \loss(\theta_{n-1}, \xi_n), \xi_n) & =  s_n L'(\theta_{n-1}, \xi_n)
\commentEq{by Eq.\eqref{implicit_algo:eq2}} \nn\\
L'(\theta_{n-1} - \gamma_n s_n \kappa_{n-1}x_n, \xi_n) & =  s_n \kappa_{n-1}
\commentEq{using $\kappa_{n-1}\triangleq L'(\theta_{n-1}, \xi_n)$
and Eq.\eqref{lemma:eq1}}
\end{align}
Eq.\eqref{lemma:eq3} is equivalent to Eq.\eqref{eq:lemma_fp}.

We now prove the second claim of the lemma regarding
the search bounds for $s_n$.
As before, we fix $\xi_n=(x_n, y_n)$.
Because $L(\theta, \xi_n) = L(x_n^\intercal\theta, y_n)$,
the derivative $L'(\theta, \xi_n)$ in Eq.\eqref{eq:definition_derivative}
is only a function of $x_n^\intercal\theta$.
For notational convenience, we can therefore define
$g(u) \triangleq -L'(\theta, \xi_n)$, where $u=x_n^\intercal \theta \in \Reals{}$. Also let $u_0 = x_n^\intercal \theta_{n-1}$,
and $c\triangleq||x_n||^2\ge 0$, and $u_\star = \gamma_n s_n g(u_0)$, then the fixed-point equation \eqref{eq:lemma_fp} can be written as
\begin{align}
\label{algo:eq1}
u_\star  = \gamma_n  g(u_0 + u_\star c).
\end{align}
where $g$ is nonincreasing because $L''\ge0$, by the lemma assumption.

\noindent \textbf{Case 1.} If $g(u_0)=0$, then $u_\star=0$.

\noindent \textbf{Case 2.}
If $g(u_0) > 0$, then $u_\star>0$ because $g$ is nonincreasing.
Also $\gamma_n  g(u_0 + u c) \le \gamma_n g(u_0)$ for all $u > 0$, because $ g(u_0 + u c)$ is nonincreasing as $c>0$; taking $u=u_\star$ in the previous
inequality yields $\gamma_n g(u_0) \ge \gamma_n g(u_0 + u_\star c) \defeq u_\star$, by the fixed-point equation \eqref{algo:eq1}. Thus, $0 < u_\star \le \gamma_n g(u_0)$.

\noindent \textbf{Case 3.}
Similarly, if $g(u_0) < 0$, then $u_\star< 0$ and $\gamma_n  g(u_0 + u c) \ge \gamma_n g(u_0)$
for all $u < 0$, since $ g(u_0 + u c)$ is nonincreasing; taking $u=u_\star$ yields $\gamma_n g(u_0) \le \gamma_n g(u_0+ u_\star c) \defeq u_\star$, by the fixed-point equation.
Thus, $\gamma_n g(u_0)  \le u_\star < 0$.
A visual proof is given Figure \ref{figure:proof}.
\begin{figure}[h!]
\centering
\includegraphics[scale=0.4]{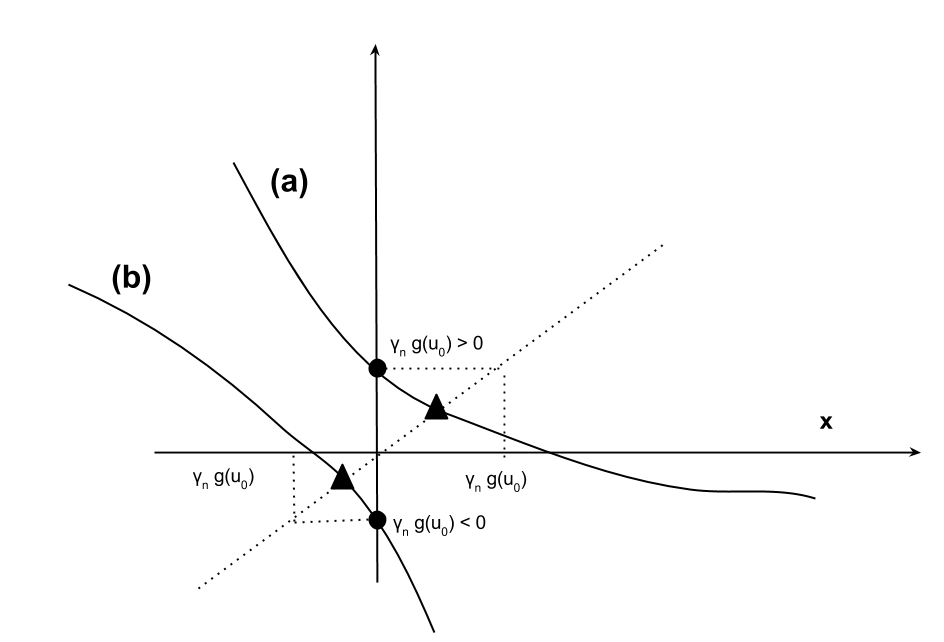}
\caption{Search bounds for solution of Eq. \eqref{algo:eq1}. {\bf Case $g(u_0) > 0$:} Corresponds to curve (a) defined as $\gamma_n g(u_0+u c), c>0$.
The solution $u_\star$ of fixed point equation \eqref{algo:eq1}
(corresponding to right triangle) is between 0 and $\gamma_n g(u_0)$ since curve (a) is nonincreasing.
{\bf Case $g(u_0) <0$: } Corresponds to curve (b) also defined as $\gamma_n g(u_0+u c)$. The solution $u_\star$ of fixed point equation \eqref{algo:eq1} (left triangle) is between $\gamma_n g(u_0)$ and 0 since curve (b) is also nonincreasing.
}
\label{figure:proof}
\end{figure}
\end{proof}
\fi
\begin{proof}
See \citet[Theorem 4.1]{toulis2014implicit}.
\end{proof}
\section{Proof of Theorem \ref{theorem:mse2}}
\subsection{Useful lemmas}
In this section, we will present the intermediate lemmas on recursions that will be useful for the non-asymptotic analysis of the implicit procedures.

\begin{lemma}
\label{lemma:decay_factor}
Consider a sequence $b_n$ such that $b_n \downarrow 0$
and $\sum_{i=1}^\infty b_i = \infty$. Then,
there exists a positive constant $K>0$, such that
\begin{align}
\label{eq:decay_factor}
\prod_{i=1}^n \frac{1}{1+b_i} \le \exp(-K  \sum_{i=1}^n b_i).
\end{align}
\end{lemma}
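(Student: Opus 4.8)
The plan is to take logarithms and reduce the product inequality to a sum inequality controlled by a single uniform per-term bound. Since $\prod_{i=1}^n (1+b_i)^{-1} = \exp\!\left(-\sum_{i=1}^n \log(1+b_i)\right)$, the claimed bound is equivalent, after exponentiating, to
\begin{align}
\sum_{i=1}^n \log(1+b_i) \ge K \sum_{i=1}^n b_i,\nn
\end{align}
so it suffices to establish the termwise inequality $\log(1+b_i) \ge K b_i$ for one constant $K>0$ valid for every index $i$.

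The key observation is that $b_n \downarrow 0$ forces the whole sequence into a bounded interval, namely $0 \le b_i \le b_1$ for all $i$; moreover $\sum_i b_i = \infty$ guarantees $b_1 > 0$, so this interval is nondegenerate. On $[0, b_1]$ I would invoke an elementary bound for the logarithm. One clean choice is $\log(1+x) \ge x/(1+x)$ for $x \ge 0$: since $1/(1+x) \ge 1/(1+b_1)$ whenever $0 \le x \le b_1$, this yields $\log(1+b_i) \ge b_i/(1+b_1)$, so one may take $K = 1/(1+b_1) > 0$. Alternatively, concavity of $x \mapsto \log(1+x)$ places its graph above the chord joining $(0,0)$ and $(b_1, \log(1+b_1))$, giving $\log(1+x) \ge \frac{\log(1+b_1)}{b_1}\, x$ on $[0, b_1]$, i.e. $K = \log(1+b_1)/b_1$.

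Summing the termwise estimate over $i = 1, \dots, n$ then gives $\sum_{i=1}^n \log(1+b_i) \ge K \sum_{i=1}^n b_i$, and passing through $-1$ and the (monotone) exponential recovers the stated inequality. There is no serious obstacle beyond arranging that $K$ be \emph{uniform} in $i$, and this is precisely what the monotonicity $b_n \downarrow 0$ buys: it caps every $b_i$ by $b_1$, letting a single chord/concavity estimate serve all terms simultaneously. I would note that the divergence hypothesis $\sum b_i = \infty$ is not actually needed for the inequality, which holds termwise regardless; it only ensures the bound is nonvacuous in the intended application, where the right-hand side otherwise decays to a positive limit rather than to zero.
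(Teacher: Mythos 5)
Your proof is correct and takes essentially the same route as the paper: the paper also reduces to the termwise bound $\log(1+b_n)/b_n \ge K$ with $K = \log(1+b_1)/b_1$, obtained from the monotonicity of $x \mapsto \log(1+x)/x$ (phrased there via the increasing-concave function $x\log(1+1/x)$), which is the same chord/concavity estimate you use. Your observation that the divergence hypothesis is not needed for the inequality itself is also accurate.
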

\begin{proof}
\if0
The function $x \log(1+1/x)$ is increasing-concave in $(0, \infty)$. Since $b_n \downarrow 0$ we
can set $x = 1/b_n$, implying that
$\log(1+b_n) / b_n$ is increasing.
Let $K=\log(1+b_1)/b_1$, then $\log(1+b_n)/b_n \ge K$ which implies that
$(1+b_n)^{-1} \le \exp(-K b_n)$.
Successive applications of this inequality
yield Ineq. \eqref{eq:decay_factor}.
\fi
See \citet[Lemma B.1]{toulis2014implicit}.
\end{proof}

\begin{lemma}
\label{lemma:implicit_recursion}
Consider scalar sequences $a_n \downarrow 0, b_n \downarrow 0$, and $c_n \downarrow 0$ such that, $a_n = \littleO{b_n}$,  and $A\triangleq \sum_{i=1}^\infty a_i < \infty$.
Suppose there exists $n'$ such that $c_n/b_n < 1$ for all $n>n'$.
Define,
\begin{align}
\label{new:defs}
\delta_n \triangleq \frac{1}{a_n} (a_{n-1}/b_{n-1} -a_n/{b_n}) \text{ and }
\zeta_n \triangleq \frac{c_n}{b_{n-1}} \frac{a_{n-1}}{a_n},
\end{align}
and suppose that $\delta_n \downarrow 0$ and $\zeta_n  \downarrow 0$.
Fix $n_0 > 0$ such that $\delta_n + \zeta_n < 1$
and $(1+c_n)/(1+b_n) < 1$, for
all $n \ge n_0$.

Consider a positive sequence $y_n>0$
that satisfies the recursive inequality,
\begin{align}
\label{ineq:implicit_recursion}
y_n \le \frac{1+c_n}{1 + b_n} y_{n-1}  + a_n.
\end{align}
Then, for every $n>0$,
\begin{align}
\label{eq:result:implicit_recursion}
y_n \le K_0 \frac{a_n}{b_n} + Q_{1}^n y_0 + Q_{n_0+1}^n (1+c_1)^{n_0} A,
 \end{align}
 where $K_0 = (1+b_1) \left(
 1-\delta_{n_0}  - \zeta_{n_0}\right)^{-1}$, and
 $Q_i^n = \prod_{j=i}^n (1+c_i)/(1+b_i)$,
such that $Q_i^n = 1$ if $n < i$, by definition.
\end{lemma}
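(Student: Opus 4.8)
The plan is to treat \eqref{ineq:implicit_recursion} as an inhomogeneous linear recursion and unroll it exactly, then control the resulting convolution sum by splitting it at the index $n_0$. Writing $r_n \triangleq (1+c_n)/(1+b_n)$, iterating the inequality gives the closed form
\[
y_n \le Q_1^n\, y_0 + \sum_{i=1}^n Q_{i+1}^n\, a_i,
\]
since $\prod_{j=1}^n r_j = Q_1^n$ and $\prod_{j=i+1}^n r_j = Q_{i+1}^n$ (with $Q_{n+1}^n = 1$). The term $Q_1^n y_0$ is already the homogeneous contribution appearing in \eqref{eq:result:implicit_recursion}, so everything reduces to bounding the inhomogeneous sum $\sum_{i=1}^n Q_{i+1}^n a_i$.

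For $n \ge n_0$ I would split this sum at $n_0$. For the early indices $i \le n_0$ I would factor $Q_{i+1}^n = Q_{i+1}^{n_0}\,Q_{n_0+1}^n$ and use the crude bound $Q_{i+1}^{n_0} \le (1+c_1)^{n_0}$, which holds because each factor satisfies $(1+c_j)/(1+b_j) \le 1+c_j \le 1+c_1$; summing over $i$ and invoking $\sum_i a_i \le A$ produces exactly the third summand $Q_{n_0+1}^n (1+c_1)^{n_0} A$. The same product bound, together with the convention $Q_{n_0+1}^n = 1$ when $n < n_0+1$, simultaneously disposes of the case $n < n_0$ (there the $K_0 a_n/b_n$ term is simply dropped). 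Thus it remains only to bound the tail $S_n \triangleq \sum_{i=n_0+1}^n Q_{i+1}^n a_i$ for $n \ge n_0$, which itself obeys $S_n = r_n S_{n-1} + a_n$ with $S_{n_0} = 0$.

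The heart of the argument is the claim $S_n \le K_0\, a_n/b_n$ for all $n \ge n_0$, which I would establish by induction. The base case $S_{n_0} = 0$ is trivial, and the inductive step from $n-1$ to $n$ amounts to verifying
\[
r_n\, K_0\, \frac{a_{n-1}}{b_{n-1}} + a_n \le K_0\, \frac{a_n}{b_n}.
\]
Dividing through by $a_n$ and substituting the identity $a_{n-1}/(a_n b_{n-1}) = \delta_n + 1/b_n$ from the definition \eqref{new:defs}, this reduces to showing $K_0\,(1-\delta_n-\zeta_n)/(1+b_n) \ge 1$.

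The hard part will be the algebraic simplification that makes $\delta_n$ and $\zeta_n$ surface cleanly: expanding $r_n(\delta_n + 1/b_n)$ and regrouping collapses the left side of the inductive inequality into $\tfrac{1}{1+b_n}\bigl[1 - c_n/b_n - (1+c_n)\delta_n\bigr]$, and then using the relation $\zeta_n = c_n\delta_n + c_n/b_n$ (read directly off the two definitions in \eqref{new:defs}) this equals $(1-\delta_n-\zeta_n)/(1+b_n)$. The required inequality $K_0 \ge (1+b_n)/(1-\delta_n-\zeta_n)$ then follows from monotonicity: since $b_n \le b_1$ and $\delta_n,\zeta_n \downarrow 0$ give $1-\delta_n-\zeta_n \ge 1-\delta_{n_0}-\zeta_{n_0} > 0$ for $n \ge n_0$, the right side is maximized at $n_0$ and equals $K_0 = (1+b_1)(1-\delta_{n_0}-\zeta_{n_0})^{-1}$. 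Combining the homogeneous term, the early-index bound, and the tail bound $S_n \le K_0 a_n/b_n$ then yields \eqref{eq:result:implicit_recursion} for every $n>0$.
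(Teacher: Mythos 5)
Your proposal is correct, and its engine is the same as the paper's: the inequality $r_nK_0\,a_{n-1}/b_{n-1}+a_n\le K_0\,a_n/b_n$ for $n\ge n_0$, obtained by unwinding the definitions of $\delta_n$ and $\zeta_n$ (your identities $a_{n-1}/(a_nb_{n-1})=\delta_n+1/b_n$ and $\zeta_n=c_n\delta_n+c_n/b_n$ are exactly what the paper's chain of manipulations encodes) together with the monotonicity $b_n\le b_1$ and $\delta_n+\zeta_n\le\delta_{n_0}+\zeta_{n_0}$ that makes $K_0$ the right constant. Where you differ is in the bookkeeping around that inequality. The paper keeps $y_n$ intact, introduces the shifted quantity $z_n=y_n-K_0\,a_n/b_n$, shows $z_n\le r_nz_{n-1}$ for $n\ge n_0$, and then has to discuss the first index $n_1$ at which $z_n$ may turn nonpositive before combining with a separate crude bound for $n<n_0$. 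You instead unroll the recursion exactly into $Q_1^ny_0+\sum_iQ_{i+1}^na_i$, split the convolution sum at $n_0$, absorb the head into $Q_{n_0+1}^n(1+c_1)^{n_0}A$, and run a clean induction on the tail $S_n$ starting from the exact initial condition $S_{n_0}=0$. This buys you a genuinely tidier argument: the sign-tracking of $z_n$ disappears, the $n<n_0$ case is handled by the same head bound rather than a separate pass, and each of the three terms in the conclusion is visibly attributable to one piece of the decomposition. The only cosmetic imprecision is your remark that the right side $(1+b_n)/(1-\delta_n-\zeta_n)$ is ``maximized at $n_0$ and equals $K_0$''; the bound actually uses $b_1$ rather than $b_{n_0}$ in the numerator, consistent with the stated definition of $K_0$, but the inequality you need holds either way.
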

\begin{proof}
See \citet[Lemma B.2]{toulis2014implicit}.
\if0
We consider the cases, $n<n_0$ and $n \ge n_0$,
and then we will combine the respective bounds.

{\bf Analysis for $n < n_0$}.
We first find a crude bound for $Q_{i+1}^n$.
It holds,
\begin{align}
\label{lemma1:q_bound}
Q_{i+1}^n \le (1+c_{i+1}) (1+c_{i+2})\cdots(1+c_n) \le (1+c_1)^{n_0},
\end{align}
since $c_1 \ge c_n$ ($c_n \downarrow 0$ by definition) and there are no more than $n_0$ terms
in the product $(n<n_0)$.
From Ineq. \eqref{ineq:implicit_recursion} we get
\begin{align}
\label{ineq0}
y_n & = Q_1^n y_0 + \sum_{i=1}^n Q_{i+1}^n a_i
	\commentEq{by expanding recursive Ineq. \eqref{ineq:implicit_recursion}}\nn \\
	& \le  Q_1^n y_0 + (1+c_1)^{n_0} \sum_{i=1}^n  a_i  \nn
	\commentEq{using Ineq. \eqref{lemma1:q_bound}}\\
	& \le  Q_1^n y_0 + (1+c_1)^{n_0} A.
\end{align}
This inequality holds also for $n=n_0$;
i.e.,
\begin{align}
\label{ineq0b}
y_{n_0} \le Q_1^{n_0} y_0 + (1+c_1)^{n_0} A.
\end{align}

{\bf Analysis for $n \ge n_0$}.
In this case, we have for all $n \ge n_0$,
\begin{align}
\label{ineq:an}
 (1+b_1) \left(
 1-\delta_n - \zeta_n\right)^{-1} \le K_0 &
 	\commentEq{by definition of $n_0, K_0$, and $\delta_n+\zeta_n \downarrow 0$}\nn \\
 	K_0 (\delta_n+ \zeta_n) + 1+ b_1  \le K_0 & \nn \\
	K_0 (\delta_n+ \zeta_n) + 1+ b_n  \le K_0 &
	\commentEq{because $b_n \le b_1$, since $b_n \downarrow 0$} \nn \\
	\frac{1}{a_n} K_0 (\frac{a_{n-1}}{b_{n-1}} - \frac{a_n}{b_n}) +
	\frac{1}{a_n} K_0 \frac{c_n a_{n-1}}{b_{n-1} } + 1+ b_n  \le K_0 &
	\commentEq{by definition of $\delta_n, \zeta_n$} \nn  \\
	a_n (1+b_n)  \le K_0 & a_n - K_0 \left(\frac{(1+c_n)a_{n-1}}{b_{n-1}} - \frac{a_n}{b_n}\right)  \nn \\
	a_n  \le K_0 & (\frac{a_n}{b_n} - \frac{1+c_n}{1+b_n}
		\frac{a_{n-1}}{b_{n-1}}).&
\end{align}
Now we combine Ineqs. \eqref{ineq:an} and \eqref{ineq:implicit_recursion}
to obtain
\begin{align}
\label{ineq1}
(y_n - K_0 \frac{a_n}{b_n}) \le
\frac{1+c_n}{1+b_n}
(y_{n-1} - K_0 \frac{a_{n-1}}{b_{n-1}}).
\end{align}
For brevity, define $z_n \triangleq y_n - K_0 a_n/b_n$.
Then, from Ineq. \eqref{ineq1},
$z_n \le \frac{1+c_n}{1+b_n} z_{n-1}$,
where $\frac{1+c_n}{1+b_n} < 1$ since $n\ge n_0$.
Assume $n_1$ is the smallest integer such that
$n_1 \ge n_0$ and $z_{n_1} \le 0$ (existence of $n_1$
is not crucial.)
Then, for all $n \ge n_1$, it follows $z_n \le 0$,
and thus $y_n \le K_0 a_n/b_n$ for all $n \ge n_1$.
When $n_0 \le n < n_1$,
all $z_n$ are positive.
Using Ineq. \eqref{ineq1} we have
$ z_n \le (\prod_{i=n_0+1}^n \frac{1+c_i}{1+b_i}) z_{n_0}
\defeq Q_{n_0+1}^n z_{n_0}$, and thus
\begin{align}
\label{ineq2}
y_n - K_0 \frac{a_n}{b_n} & \le  Q_{n_0+1}^n z_{n_0}
\commentEq{by definition of $z_n$} \nn \\
y_n & \le K_0 \frac{a_n}{b_n} + Q_{n_0+1}^n y_{n_0}
\commentEq{because $z_n \le y_n$} \nn \\
y_n & \le K_0 \frac{a_n}{b_n} + Q_{1}^n y_0 + Q_{n_0+1}^n (1+c_1)^{n_0} A.
\commentEq{by Ineq. \eqref{ineq0b} and $Q_1^{n_0} Q_{n_0+1}^n = Q_1^n$.}
\end{align}
Combining this result with Ineqs. \eqref{ineq0} and \eqref{ineq2}, we obtain, for all $n>0$,
\begin{align}
y_n \le K_0 \frac{a_n}{b_n} + Q_{1}^n y_0 + Q_{n_0+1}^n (1+c_1)^{n_0} A,
\end{align}
since $Q_{i}^n=1$ for $n < i$, by definition.
\fi
\end{proof}

\begin{corollary}
\label{corollary:implicit_recursion}
In Lemma \ref{lemma:implicit_recursion}
assume $a_n = a_1 n^{-\alpha}$ and $b_n = b_1 n^{-\beta}$, and $c_n=0$,
where $a_1, b_1, \beta>0$ and $\max\{\beta, 1\} < \alpha < 1+\beta$,
and $\beta \ne 1$.
Then,
\begin{align}
\label{thm1:eq4}
y_n \le 2\frac{a_1 (1+b_1)}{b_1} n^{-\alpha + \beta} + \exp(-\log(1+b_1)n^{1-\beta}) [y_0 +(1+b_1)^{n_0}A],
 \end{align}
 where $n_0>0$ and $A=\sum_i a_i < \infty$. If $\beta=1$ then the above inequality holds by replacing the term $n^{1-\beta}$ with $\log n$.
\end{corollary}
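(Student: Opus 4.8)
The plan is to read the statement as the special case of Lemma~\ref{lemma:implicit_recursion} in which $a_n = a_1 n^{-\alpha}$, $b_n = b_1 n^{-\beta}$ and $c_n \equiv 0$, and then simply track what the three terms of Ineq.~\eqref{eq:result:implicit_recursion} become for these power-law sequences. Since $c_n \equiv 0$ we immediately get $\zeta_n \equiv 0$ and $Q_i^n = \prod_{j=i}^n (1+b_j)^{-1}$, so the general bound collapses to $y_n \le K_0\, a_n/b_n + Q_1^n y_0 + Q_{n_0+1}^n A$. Before invoking the lemma I would first check that its hypotheses hold for $0<\beta<1$: $a_n\downarrow 0$ and $b_n\downarrow 0$ since $\alpha,\beta>0$; $a_n=\littleO{b_n}$ since $\alpha>\beta$; $A=\sum_i a_i<\infty$ since $\alpha>1$; and $\delta_n\downarrow 0$, which I would verify by the direct computation $\delta_n = (n^\alpha/b_1)\big[(n-1)^{\beta-\alpha}-n^{\beta-\alpha}\big]$, whose leading behaviour is $\tfrac{\alpha-\beta}{b_1}\, n^{\beta-1}$ by a mean-value estimate, hence $\downarrow 0$ precisely because $\beta<1$.

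The first term is then easy: fixing $n_0$ large enough that $\delta_{n_0}\le 1/2$ gives $K_0=(1+b_1)(1-\delta_{n_0})^{-1}\le 2(1+b_1)$, so $K_0\, a_n/b_n \le 2\tfrac{a_1(1+b_1)}{b_1}\, n^{-\alpha+\beta}$, matching the first summand. For the two remaining terms the key is to turn the products $Q_1^n$ and $Q_{n_0+1}^n$ into the clean exponential $\exp(-\log(1+b_1)\, n^{1-\beta})$. I would apply Lemma~\ref{lemma:decay_factor} to the sequence $b_n$ (legitimate since $\beta\le 1$ forces $\sum_i b_i=\infty$) with its constant $K=\log(1+b_1)/b_1$, obtaining $Q_1^n \le \exp\!\big(-\tfrac{\log(1+b_1)}{b_1}\sum_{i=1}^n b_i\big)=\exp\!\big(-\log(1+b_1)\sum_{i=1}^n i^{-\beta}\big)$, and then use the elementary lower bound $\sum_{i=1}^n i^{-\beta}\ge n\cdot n^{-\beta}=n^{1-\beta}$ (each summand is at least $n^{-\beta}$) to reach $Q_1^n\le \exp(-\log(1+b_1)\, n^{1-\beta})$.

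To control $Q_{n_0+1}^n$ with the same exponential I would write $Q_{n_0+1}^n = Q_1^n/Q_1^{n_0}$ and bound $Q_1^{n_0}=\prod_{i=1}^{n_0}(1+b_i)^{-1}\ge (1+b_1)^{-n_0}$ (monotonicity of $b_i$), so that $Q_{n_0+1}^n\le (1+b_1)^{n_0} Q_1^n \le (1+b_1)^{n_0}\exp(-\log(1+b_1)\, n^{1-\beta})$. Collecting the last two terms yields $Q_1^n y_0 + Q_{n_0+1}^n A \le \exp(-\log(1+b_1)\, n^{1-\beta})\big[y_0+(1+b_1)^{n_0}A\big]$, which together with the first-term bound is exactly Ineq.~\eqref{thm1:eq4}. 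For the excluded value $\beta=1$ the only change is in the partial-sum estimate: the ``each summand'' trick no longer produces $n^{1-\beta}$, but the integral comparison $\sum_{i=1}^n i^{-1}\ge \log n$ replaces $n^{1-\beta}$ by $\log n$ throughout, giving the stated variant.

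The step I expect to be the main obstacle is the verification that $\delta_n\downarrow 0$ together with the clean control of $K_0$: this is where the hypothesis $\beta<1$ is genuinely used, and it is the place where the asymptotics $(n-1)^{\beta-\alpha}-n^{\beta-\alpha}$, of order $n^{\beta-\alpha-1}$, must be made rigorous rather than merely heuristic. The boundary case $\beta=1$ is delicate for the same reason: there one finds $\delta_n\to(\alpha-1)/b_1$, a positive constant rather than $0$, so the lemma's hypothesis only marginally fails and $K_0$ has to be bounded instead using that this limit is strictly below $1$, while the product estimate goes through unchanged with $\log n$ in place of $n^{1-\beta}$.
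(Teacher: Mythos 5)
Your proposal is correct and follows essentially the same route as the paper's own proof: specialize Lemma \ref{lemma:implicit_recursion} with $\zeta_n\equiv 0$, bound $\delta_n\le \tfrac{1}{b_1}n^{\beta-1}$ to pick $n_0$ with $\delta_{n_0}\le 1/2$ and hence $K_0\le 2(1+b_1)$, control $Q_1^n$ via Lemma \ref{lemma:decay_factor} together with $\sum_{i=1}^n i^{-\beta}\ge n^{1-\beta}$, and absorb $1/Q_1^{n_0}\le(1+b_1)^{n_0}$ into the initial-condition term. If anything, your treatment is slightly more careful than the paper's, both in the elementary justification of $\sum_{i=1}^n i^{-\beta}\ge n^{1-\beta}$ and in flagging that for $\beta=1$ the hypothesis $\delta_n\downarrow 0$ genuinely fails ($\delta_n\to(\alpha-1)/b_1$), a point the paper glosses over.
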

\begin{proof}
See \citet[Corollary B.1]{toulis2014implicit}.
\if0
In this proof we will assume, for simplicity,
$(n-1)^{-c} - n^{-c} \le n^{-1-c}$, $c \in (0, 1)$, for every $n>0$.
Furthermore, we assume $\sum_{i=1}^n i^{-\beta} \ge n^{1-\beta}$,
for every $n>0$ and $\beta \in (0, 1)$.
Formally, this holds for $n \ge n'$, where $n'$ in practice is very small (e.g., $n'=14$ if $\beta=0.1$,
$n'=5$ if $\beta=0.5$, and $n'=9$ if $\beta=0.9$, etc.)
It is straightforward to derive appropriate constants
to make the inequalities tight for every $n>0$.
If $\beta=1$ this inequality holds by replacing the term $n^{1-\beta}$ with
$\log n$; we will not consider this case as the analysis is otherwise identical with the analysis for the case $\beta \ne 1$.

By definition,
\begin{align}
\delta_n \defeq \frac{1}{a_n} (\frac{a_{n-1}}{b_{n-1}}-\frac{a_n}{b_n}) & =
\frac{1}{a_1 n^{-\alpha}} \frac{a_1}{b_1} ((n-1)^{-\alpha+\beta} - n^{-\alpha+\beta}) \nn\\
& = \frac{1}{n^{-\alpha} b_1} [(n-1)^{-\alpha+\beta} - n^{-\alpha+\beta}]\nn\\
& \le \frac{1}{b_1} n^{-1+\beta}.
\end{align}
Also, $\zeta_n = 0$ since $c_n=0$.
We can take $n_0 = \lceil (2/b_1)^{1/(1-\beta)} \rceil$,
for which $\delta_{n_0} \le \frac{1}{b_1} n_0^{-1+\beta} \le 1/2$.
Therefore,
$K_0 \defeq (1+b_1)(1-\delta_{n_0})^{-1} \le 2 (1+b_1)$.
Since $c_n =0$, $Q_{i}^n = \prod_{j=i}^n (1+b_i)^{-1}$.
Thus,
\begin{align}
\label{cor:ineqs_a}
Q_1^n & \ge (1+b_1)^{-n},
\end{align}
and also
\begin{align}
\label{cor:ineqs_b}
Q_1^n & \le \exp(-\log(1+b_1)/b_1 \sum_{i=1}^n b_i),
 \commentEq{by Lemma \ref{lemma:decay_factor}.} \nonumber \\
 Q_1^n & \le  \exp(-\log(1+b_1) n^{1-\beta}).
 \commentEq{because $\sum_{i=1}^n i^{-\beta} \ge n^{1-\beta}$.}
\end{align}
 Lemma \ref{lemma:implicit_recursion} and Ineqs.
 \eqref{cor:ineqs_a} and \eqref{cor:ineqs_b} imply
\begin{align}
y_n & \le K_0 \frac{a_n}{b_n} + Q_{1}^n y_0 + Q_{n_0+1}^n (1+c_1)^{n_0} A
\commentEq{by Lemma  \ref{lemma:implicit_recursion} }
\nn \\
 & \le 2\frac{a_1 (1+b_1)}{b_1} n^{-\alpha + \beta} + Q_1^{n} [y_0 + \frac{1}{Q_{1}^{n_0}}A]
 \commentEq{by $c_n=0$, and $Q_{1}^{n_0} Q_{n_0+1}^n = Q_1^n$}
\nn \\
 & \le 2\frac{a_1 (1+b_1)}{b_1} n^{-\alpha + \beta} + Q_1^{n} [y_0 + (1+b_1)^{n_0}A]
 \commentEq{by Ineq. \eqref{cor:ineqs_a}}
 \nn \\
&  \le  2\frac{a_1 (1+b_1)}{b_1} n^{-\alpha + \beta} + \exp(-\log(1+b_1)n^{1-\beta}) [y_0 +(1+b_1)^{n_0}A].
\commentEq{by Ineq.\eqref{cor:ineqs_b}}
\end{align}
\fi
\end{proof}

\begin{lemma}
\label{lemma:useful}
Suppose Assumptions \aLinearLoss, \aLipZero, and \aFisher\ hold.
Then, almost surely,
\begin{align}
\label{eq:useful:lam}
s_n  & \ge \frac{1}{1 + \gamma_n \maxF},\\
\label{eq:useful:mse2}
||\theta_{n}-\theta_{n-1}||^2 & \le 4\constLipZero^2 \gamma_n^2,
\end{align}
where $s_n$ is defined in Lemma \ref{lemma:implicit_algo},
and $\theta_{n}$ is the $n$th iterate of implicit SGD \eqref{eq:aisgd_implicit}.
\end{lemma}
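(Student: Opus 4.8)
\emph{Proof plan.} I would establish the two inequalities separately. The increment bound \eqref{eq:useful:mse2} is the routine one, so I would dispatch it first. The implicit update \eqref{eq:aisgd_implicit} gives $\theta_n - \theta_{n-1} = -\gamma_n \nabla \loss(\theta_n, \xi_n)$, so it suffices to bound the gradient norm uniformly. By \assumeLipZero, $\loss(\cdot, \xi)$ is $\constLipZero$-Lipschitz almost surely, and for a differentiable function this is equivalent to $\|\nabla \loss(\theta, \xi)\| \le \constLipZero$ for every $\theta$ (bound a directional difference quotient along the gradient direction and let the increment vanish). Hence $\|\theta_n - \theta_{n-1}\|^2 = \gamma_n^2 \|\nabla \loss(\theta_n, \xi_n)\|^2 \le \constLipZero^2 \gamma_n^2 \le 4\constLipZero^2 \gamma_n^2$; the constant $4$ is loose and simply absorbs a cruder triangle-inequality estimate (e.g.\ splitting $\nabla \loss(\theta_n, \xi_n)$ through $\nabla\loss(\thetastar,\xi_n)$), which is presumably where it originates.

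For the scaling bound \eqref{eq:useful:lam}, the plan is to solve the scalar fixed-point equation \eqref{eq:lemma_fp} exactly, up to a mean-value point. By \assumeLinearLoss, the derivative $\loss'(\theta, \xi_n)$ depends on $\theta$ only through the natural parameter $u = x_n^\intercal \theta$, so I set $g(u) \defeq \loss'(\theta, \xi_n)$, a differentiable scalar function with $g'(u) = \loss''(\theta, \xi_n)$ (Definition \ref{definition:derivative}). Writing $u_{n-1} = x_n^\intercal \theta_{n-1}$ and $u_n = x_n^\intercal \theta_n$, Lemma \ref{lemma:implicit_algo} gives $\theta_n = \theta_{n-1} - s_n \gamma_n \kappa_{n-1} x_n$, hence $u_n - u_{n-1} = -s_n \gamma_n \kappa_{n-1} \|x_n\|^2$, while \eqref{eq:lemma_fp} states $g(u_n) = s_n \kappa_{n-1}$ with $g(u_{n-1}) = \kappa_{n-1}$.

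Applying the mean value theorem to $g$ between $u_{n-1}$ and $u_n$ yields $(s_n - 1)\kappa_{n-1} = -g'(\tilde u)\, s_n \gamma_n \kappa_{n-1} \|x_n\|^2$ for some intermediate $\tilde u$, corresponding to some $\tilde\theta$ on the segment joining $\theta_{n-1}$ and $\theta_n$. Assuming $\kappa_{n-1}\ne 0$, I cancel $\kappa_{n-1}$ and rearrange to the closed form
\begin{align}
s_n = \frac{1}{1 + \gamma_n \loss''(\tilde\theta, \xi_n)\|x_n\|^2} = \frac{1}{1 + \gamma_n\, \mathrm{tr}(\Fisherobs{\tilde\theta})},\nn
\end{align}
using $\nabla^2 \loss(\tilde\theta, \xi_n) = \loss''(\tilde\theta, \xi_n)\, x_n x_n^\intercal$, so that $\mathrm{tr}(\Fisherobs{\tilde\theta}) = \loss''(\tilde\theta, \xi_n)\|x_n\|^2 \ge 0$. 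By \assumeFisher\ this trace is positive, and controlling it by $\maxF$ forces $s_n \in (0,1]$ and, via $\mathrm{tr}(\Fisherobs{\tilde\theta}) \le \maxF$, gives $s_n \ge (1 + \gamma_n \maxF)^{-1}$. The degenerate case $\kappa_{n-1} = 0$ forces $\nabla \loss(\theta_{n-1}, \xi_n) = \kappa_{n-1} x_n = 0$, hence $\theta_n = \theta_{n-1}$ and $s_n = 1$, which satisfies the bound trivially.

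The main obstacle is the scaling bound: beyond justifying the mean-value reduction (differentiability of $g$ and the location of $\tilde\theta$ on the segment), the delicate point is to invoke \assumeFisher\ in the correct direction. The lower bound on $s_n$ follows from an \emph{upper} bound on the observed trace $\mathrm{tr}(\Fisherobs{\cdot})$ by $\maxF$, which is the opposite direction from the lower curvature bound used to establish strong convexity elsewhere in the analysis; keeping the two roles of $\phi$ straight is the crux. The increment bound is otherwise immediate.
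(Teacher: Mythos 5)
Your argument for the scaling bound \eqref{eq:useful:lam} is essentially the paper's: the paper likewise expands $L'$ to first order through the natural parameter (a Taylor/mean-value step producing $\tilde L'' = L''(x_n^\intercal\tilde\theta, y_n)$ at an intermediate point on the segment), arrives at the same closed form $s_n = \left(1+\gamma_n\,\mathrm{tr}(\Fisherobs{\tilde\theta})\right)^{-1}$, and then invokes \assumeFisher\ to conclude. You are also right that this last step needs an \emph{upper} bound $\mathrm{tr}(\Fisherobs{\tilde\theta})\le\maxF$, even though \assumeFisher\ as stated only asserts $\mathrm{tr}(\Fisherobs{\theta})\ge\minF$; the paper's own proof makes exactly the same silent appeal, so flagging the two roles of $\phi$ is a point in your favor rather than a gap relative to the paper. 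Where you genuinely diverge is the increment bound \eqref{eq:useful:mse2}: you deduce $\|\nabla L(\theta,\xi)\|\le\constLipZero$ from \assumeLipZero\ and differentiability, giving $\|\theta_n-\theta_{n-1}\|\le\constLipZero\gamma_n$ directly, whereas the paper writes the implicit step as the proximal minimization \eqref{eq:proximal} and compares the objective at $\theta=\theta_n$ versus $\theta=\theta_{n-1}$ to get $\frac{1}{2\gamma_n}\|\theta_n-\theta_{n-1}\|^2 \le L(\theta_{n-1},\xi_n)-L(\theta_n,\xi_n)\le \constLipZero\|\theta_n-\theta_{n-1}\|$, which is where the constant $4$ actually comes from (not from a triangle-inequality split through $\thetastar$ as you guessed). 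Your route is shorter and yields the sharper constant $\constLipZero^2\gamma_n^2$; the paper's variational argument has the advantage of not requiring differentiability at the minimizer $\theta_n$, but since \assumeLinearLoss\ already grants almost-sure differentiability, both arguments are valid and your bound trivially implies the stated one.
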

\begin{proof}
See \citet[Lemma B.3]{toulis2014implicit}.
\if0
For the first part, from Lemma \ref{lemma:implicit_algo}
and \assumeLinearLoss, the random variable $s_n$ satisfies
(see Eq. \ref{lemma:eq2} for derivation)
\begin{align}
\label{useful:eq1}
L(x_n^\intercal \theta_{n}, y_n)  = s_n L(x_n^\intercal \theta_{n-1}, y_n).
\end{align}
Using definitions \eqref{eq:aisgd_implicit} and \eqref{eq:definition_derivative},
\begin{align}
\label{useful:eq1b}
\thetaim{n} = \thetaim{n-1} - \gamma_n s_n L'(x_n^\intercal \theta_{n-1}, y_n) x_n,
\end{align}
where we used the fact that $L'$ is a function of $\theta$ only through
$x_n^\intercal\theta$ (see also proof of Lemma \ref{lemma:implicit_algo}).
We use this definition of $\thetaim{n}$ into Eq. \eqref{useful:eq1} and
perform a Taylor approximation on $L'$ to obtain
\begin{align}
\label{useful:eq2}
L'(x_n^\intercal \thetaim{n}, y_n)  = L'(x_n^\intercal \thetaim{n-1}, y_n) -
 \tilde{L}''\gamma_n s_n L'(x_n^\intercal \theta_{n-1}, y_n) ||x_n||^2,
\end{align}
where $ \tilde{L}'' \triangleq L''(x_n^\intercal \tilde{\theta}, y_n)$, and $\tilde{\theta} = \delta \theta_{n-1} + (1-\delta)\theta_n$, and $\delta \in [0, 1]$. Starting from Eq. \eqref{useful:eq1},
\begin{align}
\label{useful:eq2b}
L(x_n^\intercal \theta_{n}, y_n)  & = s_n L(x_n^\intercal \theta_{n-1}, y_n)
\commentEq{Eq. \eqref{useful:eq1}}\nn\\
(1 + \gamma_n \tilde{L}'' ||x_n||^2) s_n & = 1
\commentEq{by Eq. \eqref{useful:eq2}}\nn \\
\left(1+\gamma_n \mathrm{trace}(\Fisherobs{\tilde{\theta}})\right)
 s_n & = 1
\commentEq{by \assumeLinearLoss, $\Fisherobs{\theta} = L''(\theta, \xi) x_nx_n^\intercal$}\nn\\
(1+\gamma_n\maxF) s_n & \ge 1.
\commentEq{by \assumeFisher}
\end{align}

\noindent For the second part of this lemma,
since the log-likelihood is differentiable (\assumeLinearLoss) we can re-write the definition of implicit SGD \eqref{eq:aisgd_implicit} as
\begin{align}
\thetaim{n} = \arg\min \{\frac{1}{2\gamma_n} ||\theta-\thetaim{n-1}||^2 + L(x_n^\intercal \theta, y_n) \}.\nn
\end{align}
Therefore, comparing $\theta= \thetaim{n-1}$
 and $\theta =  \theta_n$ using the above equation, we obtain
\begin{align}
\label{useful:eq3}
\frac{1}{2\gamma_n} ||\thetaim{n}-\thetaim{n-1}||^2 + L(x_n^\intercal \thetaim{n}, y_n)  & \le L(x_n^\intercal \thetaim{n-1}, y_n) \nn\\
||\thetaim{n}-\thetaim{n-1}||^2 &
\le 2\gamma_n \left(L(x_n^\intercal \thetaim{n-1}, y_n)  - L(x_n^\intercal \thetaim{n},y_n) \right) \nn\\
||\thetaim{n}-\thetaim{n-1}||^2 &
\le 2\gamma_n \constLipZero ||\thetaim{n} - \thetaim{n-1}||.
\commentEq{by \assumeLipZero}
\nn\\
||\thetaim{n}-\thetaim{n-1}|| &
\le 2\gamma_n \constLipZero \nn\\
||\thetaim{n}-\thetaim{n-1}||^2 &
\le 4 \constLipZero^2 \gamma_n^2.
\end{align}
\fi
\end{proof}

\TheoremMSE\
\begin{proof}
See \citet[Theorem 3.1]{toulis2014implicit}.
\if0
Starting from the definition of the implicit procedure
\eqref{eq:aisgd_implicit} we have
\begin{align}
 \label{mse2:eq1}
\thetaim{n} - \thetastar  = &
	 \thetaim{n-1}-\thetastar - \gamma_n \nabla L(\theta_n, \xi_n)
	 \nn\\
\thetaim{n} - \thetastar   = &  \thetaim{n-1}-\thetastar -  \gamma_n s_n  \nabla L(\theta_{n-1}, \xi_n)
\commentEq{By Lemma \ref{lemma:implicit_algo}}\nn\\
||\thetaim{n}-\thetastar||^2  = &
	||\thetaim{n-1}-\thetastar||^2  - 2 \gamma_n s_n (\thetaim{n-1}-\thetastar)^\intercal \nabla L(\theta_{n-1}, \xi_n)  + \gamma_n^2 ||\nabla L(\theta_n, \xi_n)||^2.
\end{align}
The last term can be bounded since $-\gamma_n\nabla L(\theta_n, \xi_n) =
\thetaim{n} - \thetaim{n-1}$ by definition; thus,
\begin{align}
\label{mse2:eq2}
 \gamma_n^2 ||\nabla L(\theta_n, \xi_n)||^2 = ||\thetaim{n}-\thetaim{n-1}||^2 \le 4 \constLipZero^2 \gamma_n^2,
\end{align}
which holds almost-surely by Lemma \ref{lemma:useful}-Ineq.\eqref{eq:useful:mse2}.

For the second-term we can bound its expectation as follows,
\begin{align}
\label{mse2:eq3}
\mathbb{E}(2 \gamma_n & s_n (\thetaim{n-1}-\thetastar)^\intercal
\nabla L(\theta_{n-1}, \xi_n))  \nn\\
& \ge \frac{2\gamma_n}{1 + \gamma_n \maxF} \Ex{(\thetaim{n-1}-\thetastar)^\intercal \nabla L(\theta_{n-1}, \xi_n) }\nn
\commentEq{by Lemma \ref{lemma:useful}-Ineq. \eqref{eq:useful:lam}}\\
& \ge \frac{2\gamma_n}{1 + \gamma_n \maxF} \Ex{(\thetaim{n-1}-\thetastar)^\intercal \nabla \exloss(\theta_{n-1})}\nn
\commentEq{where
$\nabla \exloss(\thetaim{n-1}) \defeq\mathbb{E}(\nabla L(\theta_{n-1}, \xi_n)  | \mF{n-1})$} \\
& \ge \frac{\gamma_n \minEigF}{1 + \gamma_n \maxF}
||\thetaim{n-1}-\thetastar||^2.
\commentEq{by strong-convexity in \assumeFisher}
\end{align}

Taking expectations in Eq. \eqref{mse2:eq1}
and substituting Ineq. \eqref{mse2:eq2} and Ineq. \eqref{mse2:eq3} into Eq.\eqref{mse2:eq1} yields the recursion,
\begin{align}
\label{mse2:eq4}
\Ex{||\thetaim{n}-\thetastar||^2} \le (1-\frac{\gamma_n \minEigF}{1 + \gamma_n \maxF}) \Ex{||\thetaim{n-1}-\thetastar||^2}
+ 4\constLipZero^2  \gamma_n^2.
\end{align}
By \assumeFisher, $\minF \ge \minEigF$, and
through simple algebra we obtain,
\begin{align}
\label{mse2:eq5}
(1-\frac{\gamma_n \minEigF}{1 + \gamma_n \minF})
\le \frac{1}{1 + \gamma_n \minEigF \epsilon },
\end{align}
for all $n >0$, where $\epsilon = (1+\gamma_1 (\maxF-\minEigF))^{-1}$. Therefore we can write recursion \eqref{mse2:eq4} as
\begin{align}
\label{mse2:eq6}
\Ex{||\thetaim{n}-\thetastar||^2} \le \frac{1}{1 + \gamma_n \minEigF \epsilon}  \Ex{||\thetaim{n-1}-\thetastar||^2} + 4\constLipZero^2 \gamma_n^2.
\end{align}
We can now apply Corollary \ref{corollary:implicit_recursion}
with $a_n\equiv 4 \constLipZero^2 \gamma_n^2$ and $b_n \equiv \gamma_n \minEigF \epsilon$.
\fi
 \end{proof}

\remark\ \#1.
Assuming Lipschitz continuity of the gradient $\nabla L$ instead
of function $L$, i.e., \assumeLipOne\ over \assumeLipZero\, would not alter the main result
of Theorem \ref{theorem:mse2} about the $\bigO{n^{-\gamma}}$
rate of the mean-squared error.
Assuming Lipschitz continuity with constant $\constLipOne$ of $\nabla L$ and boundedness of $\Ex{||\nabla L(\thetastar, \xi_n)||^2} \le \sigma^2$,
as it is typical in the literature,
would simply add a term $\gamma_n^2 \constLipOne^2 \Ex{||\thetaim{n}-\thetastar||^2} + \gamma_n^2 \sigma^2$ in the corresponding recursive inequality.
Specifically, by Lemma \ref{lemma:implicit_algo},
$s_n\le 1$, and thus
\begin{align}
\Ex{||\nabla L(\theta_n, \xi_n)||^2} = \Ex{s_n^2 ||\nabla L(\theta_{n-1}, \xi_n)||^2} & \le  \Ex{ ||\nabla L(\theta_{n-1}, \xi_n)||^2}
\nn\\
& = \Ex{||\nabla L(\theta_{n-1}, \xi_n) - \nabla L(\thetastar, \xi_n) + \nabla L(\thetastar, \xi_n)||^2}
\nn\\
& \le \constLipOne^2 \Ex{||\theta_{n-1}-\thetastar||^2} + \gamma_n^2 \Ex{||\nabla L(\thetastar, \xi_n)||^2}\nn\\
& \le \constLipOne^2 \Ex{||\theta_{n-1}-\thetastar||^2} + \gamma_n^2 \sigma^2.
\end{align}
The recursion for the implicit errors would then be
\begin{align}
\Ex{||\thetaim{n}-\thetastar||^2} \le (\frac{1}{1 + \gamma_n \minEigF \epsilon} + \constLipOne^2 \gamma_n^2)  \Ex{||\thetaim{n-1}-\thetastar||^2} + \gamma_n^2 \sigma^2\nn,
\end{align}
which also implies the $\bigO{n^{-\gamma}}$ convergence rate.
However, it is an open problem whether it is possible to derive a
nice stability property for implicit SGD under \assumeLipOne\,
similar to the result of Theorem \ref{theorem:mse2} under
\assumeLipZero.

\remark\ \#2. An assumption of almost-sure convexity can
simplify the analysis significantly.
For example, similar to the assumption of \citet{ryustochastic}, assume that $L(\theta, \xi)$ is convex almost surely such that
\begin{align}
(\theta_n-\thetastar)^\intercal \nabla L(\theta_n, \xi_n) \ge
\frac{\mu_n}{2}||\thetaim{n}-\thetastar||^2,
\end{align}
where $\mu_n\ge0$ and $\Ex{\mu_n} = \mu>0$. Then,
\begin{align}
\theta_n + 2\gamma_n \nabla L(\theta_n, \xi_n) & = \theta_{n-1}
\commentEq{by definition of implicit SGD \eqIsgd}\nn\\
||\theta_n-\thetastar||^2 + 2\gamma_n (\theta_n-\thetastar)^\intercal \nabla L(\theta_n, \xi_n) & \le ||\theta_{n-1}-\thetastar||^2.
\nn\\
(1+\gamma_n \mu_n) ||\theta_n-\thetastar||^2 & \le ||\theta_{n-1}-\thetastar||^2.
\nn\\
\Ex{||\theta_n-\thetastar||^2} & \le \frac{1}{1+\gamma_n\mu}\Ex{||\theta_{n-1}-\thetastar||^2}
+ \text{SD}(1+\gamma_n\mu_n)\text{SD}(||\theta_n-\thetastar||^2),
\end{align}
where the last inequality
follows from the identity $\Ex{X Y} \ge \Ex{X}\Ex{Y} - \text{SD}(X)\text{SD}(Y)$. However, $\text{SD}(1+\gamma_n \mu_n) = \bigO{\gamma_n}$, and assuming bounded $\theta_n$
we get
\begin{align}
\Ex{||\theta_n-\thetastar||^2} & \le \frac{1}{1+\gamma_n\mu}\Ex{||\theta_{n-1}-\thetastar||^2}
+ \bigO{\gamma_n},
\end{align}
which indicates a fast convergence towards $\thetastar$. It is also possible
to work with the recursion
\begin{align}
||\theta_n-\thetastar||^2 & \le \frac{1}{1+\gamma_n\mu_n}||\theta_{n-1}-\thetastar||^2,
\end{align}
and then use a stochastic version of Lemma \ref{lemma:implicit_recursion} although the analysis would be more complex in this case.

\section{Proof of Theorem \ref{theorem:mse2_aisgd}}

In this section, we prove Theorem \ref{theorem:mse2_aisgd}.
To do so, we need bounds for $\Ex{||\theta_n-\thetastar||^2}$,
which are available through Theorem \ref{theorem:mse2},
but also bounds for $\Ex{||\theta_n-\thetastar||^4}$, which are
established in the following lemma.

\TheoremMSEfourth\

\begin{proof}
Define
$W_n \triangleq s_n (\thetaim{n-1}-\thetastar)^\intercal \nabla L(\theta_{n-1}, \xi_n)$ for compactness, and proceed as folllows,
\begin{align}
\label{mse4:eq1}
||\thetaim{n}-\thetastar||^2  & =
	||\thetaim{n-1}-\thetastar||^2  - 2 \gamma_n s_n (\thetaim{n-1}-\thetastar)^\intercal \nabla L(\theta_{n-1}, \xi_n)  + \gamma_n^2 ||\nabla L(\theta_n, \xi_n)||^2
\nn\\
||\thetaim{n}-\thetastar||^2  & =
	||\thetaim{n-1}-\thetastar||^2  - 2 \gamma_n W_n + \gamma_n^2 ||\nabla L(\theta_n, \xi_n)||^2
\commentEq{by definition}\nn\\
||\thetaim{n}-\thetastar||^2   & \le
	||\thetaim{n-1}-\thetastar||^2  - 2 \gamma_n W_n  +
	4 \constLipZero^2 \gamma_n^2,
\nn\\
||\thetaim{n}-\thetastar||^4 & \le
||\thetaim{n-1}-\thetastar||^4 + 4\gamma_n^2 W_n^2
+ 16 \constLipZero^4\gamma_n^4\nn\\
& -2\gamma_n ||\theta_{n-1}-\thetastar||^2 W_n + 4\constLipZero^2 \gamma_n^2 ||\theta_{n-1}-\thetastar||^2 - 8\constLipZero^2 \gamma_n^3 W_n.
\end{align}

By Lemma 4 we have
\begin{align}
\label{mse4:eq2}
\ExCond{W_n}{\Fn{n-1}} \ge \frac{\minEigF}{2(1+\gamma_n\maxF)}||\theta_{n-1}-\thetastar||^2.
\end{align}

Furthermore,
\begin{align}
\label{mse4:eq3}
\ExCond{W_n^2}{\Fn{n-1}} & \defeq
\ExCond{[s_n(\theta_{n-1}-\thetastar)^\intercal \nabla \loss(\theta_{n-1}, \xi_n)]^2}{\Fn{n-1}}\nn\\
& \defeq \ExCond{[(\theta_{n-1}-\thetastar)^\intercal \nabla \loss(\theta_{n}, \xi_n)]^2}{\Fn{n-1}}
\commentEq{by Lemma 1}\nn\\
& \le ||\theta_{n-1}-\thetastar||^2 \ExCond{ ||\nabla \loss(\theta_n, \xi_n)||^2}{\Fn{n-1}}
\commentEq{by Cauchy-Schwartz inequality}\nn\\
& \le 4\constLipZero^2 ||\theta_{n-1}-\thetastar||^2
\commentEq{by Lemma 4}
\end{align}

Define $B_n\triangleq\Ex{||\theta_{n}-\thetastar||^2}$ for notational brevity.
We use results \eqref{mse4:eq2} and \eqref{mse4:eq3}
to get
\begin{align}
\label{mse4:eq4}
\Ex{||\theta_{n}-\thetastar||^4} & \le \left(1-\frac{\gamma_n\minEigF}{1+\gamma_n\maxF}\right) \Ex{||\theta_{n-1}-\thetastar||^4}
+ 4\constLipZero^2\gamma_n^2 (5-\frac{\gamma_n\minEigF}{1+\gamma_n\maxF})B_{n-1} + 16\constLipZero^4\gamma_n^4\nn\\
\Ex{||\theta_{n}-\thetastar||^4} & \le \left(1-\frac{\gamma_n\minEigF}{1+\gamma_n\maxF}\right) \Ex{||\theta_{n-1}-\thetastar||^4}
+ 20\constLipZero^2\gamma_n^2 B_{n-1}+ 16\constLipZero^4\gamma_n^4\nn\\
\Ex{||\theta_{n}-\thetastar||^4} & \le \frac{1}{1+\gamma_n \minEigF\epsilon}  \Ex{||\theta_{n-1}-\thetastar||^4}
+ 20\constLipZero^2\gamma_n^2 B_{n-1}+ 16\constLipZero^4\gamma_n^4.
\commentEq{by \assumeFisher}\nn\\
\Ex{||\theta_{n}-\thetastar||^4} & \le \frac{1}{1+\gamma_n \minEigF\epsilon}  \Ex{||\theta_{n-1}-\thetastar||^4}
+ K_0 \gamma_n^3 + e^{-\log \lambda \cdot n^{1-\gamma}} K_1 + K_2\gamma_n^4,
\commentEq{by Theorem \ref{theorem:mse2}}
\end{align}
where $\lambda = (1+\gamma_1(\maxF-\minEigF))^{-1}$
and $\Gamma^2=4\constLipZero^2\sum\gamma_i^2$, (as in Theorem \ref{theorem:mse2}), $K_0\triangleq 160\constLipZero^4\lambda / \minEigF$,
$K_1\triangleq20\constLipZero^2(\Ex{||\theta_0-\thetastar||^2}+ \lambda^{n_0}\Gamma^2)$, and $K_2\triangleq16\constLipZero^4$,
and $n_0$ is a constant defined in the proof of Theorem \ref{theorem:mse2}.

Now, define
\begin{align}
\label{mse4:eq5}
K_3 \triangleq K_0 + K_2\gamma_1 + \max\{\frac{e^{-\log\lambda \cdot \rho_\gamma(n) K_1}}{\gamma_n^3} \},
\end{align}
which exists and is finite. Through simple algebra it is easy to
verify that
\begin{align}
\label{mse4:eq6}
K_0 \gamma_n^3 + e^{-\log \lambda \cdot \rho_\gamma(n)} K_1 + K_2\gamma_n^4 \le K_3 \gamma_n^3,
\end{align}
for all $n$. Therefore, we can simplify Ineq.~\eqref{mse4:eq4} as
\begin{align}
\label{mse4:eq7}
\Ex{||\theta_{n}-\thetastar||^4} & \le \frac{1}{1+\gamma_n \minEigF\epsilon}  \Ex{||\theta_{n-1}-\thetastar||^4} + K_3 \gamma_n^3.
\end{align}
We can now apply Corollary \ref{corollary:implicit_recursion}
with $a_n\equiv K_3\gamma_n^3$ and $b_n \equiv \gamma_n \minEigF \epsilon$ to derive the final bounds for $\Ex{||\theta_n-\thetastar||^4}$.
\end{proof}

We now evaluate the mean squared error of the averaged iterates,
$\thetaBar{n}$.
\TheoremMSEaisgd
\begin{proof}
We leverage a result shown for averaged explicit stochastic gradient descent. In particular, it has been shown that the squared error for the averaged iterate satisfies:
\begin{align}
\label{thm4:eq1}
(\Ex{||\thetaBar{n}-\thetastar||^2})^{1/2} \le
& \frac{1}{\sqrt{n}} \left(\mathrm{trace}(\nabla^2 \exloss(\thetastar)^{-1} \Sigma \nabla^2 \exloss(\thetastar)^{-1})\right)^{1/2} \nn \\
& +  \cTwo (\Ex{||\theta_n-\thetastar||^2})^{1/2} \nn \\
& + \cFour \sum_{i=1}^n (\Ex{||\theta_i-\thetastar||^4}^{1/2}.
\end{align}
The proof technique for \eqref{thm4:eq1} was first
devised by \citet{polyak1992acceleration},
but was later refined by \citet{xu2011towards}, and \citet{moulines2011non}.
In this paper,we follow the formulation
of \citet[Theorem 3, page 20]{moulines2011non};  the derivation of Ineq.\eqref{thm4:eq1}
 for the implicit procedure
is identical to the derivation for the explicit one, however
the two procedures differ in the terms that
appear in the bound \eqref{thm4:eq1}.

All such terms in \eqref{thm4:eq1} have been bounded
in the previous sections. In particular, we can use
Theorem \ref{theorem:mse2} for $\Ex{||\theta_n-\thetastar||^2}$; we can also use Theorem \ref{theorem:mse2_aisgd}
and the concavity of the square-root to derive
\begin{align}
\label{thm4:eq2}
\sum_{i=1}^n (\Ex{||\theta_i-\thetastar||^4}^{1/2}
& \le \sum_{i=1}^n
\left(
\Kone i^{-\gamma} + e^{-\log \lambda \cdot i^{1-\gamma}/2}\Ktwo
 \right) \nn \\
& \le \Kone  n^{1-\gamma} + K_2(n)\Ktwo,
\end{align}
where $K_2(n) = \sum_{i=1}^n\exp\left(- \frac{\log \lambda}{2} i^{1-\gamma}\right)$, $\zeta_0 = \Ex{||\theta_0 - \thetastar||^4}$, and $\Delta^3, n_{0,2}$ are defined
in Lemma~\ref{lemma:mse4}, substituting $n_0$ for $n_{0,2}$.
Similarly, using Theorem \ref{theorem:mse2},
\begin{align}
(\Ex{||\theta_n-\thetastar||^2}^{1/2} \le
\Kthree n^{-\gamma/2} + e^{-\log \lambda \cdot n^{1-\gamma}/2}\Kfour, \nn
\end{align}
where $\delta_0 = \Ex{||\theta_n- \thetastar||^2}$,
and $n_{0,1}, \Gamma^2$ are defined in Theorem \ref{theorem:mse2},
substituing $n_{0,1}$ for $n_0$.
These two bounds can be used in Ineq.\eqref{thm4:eq1}
and thus yield the result of Theorem \ref{theorem:mse2_aisgd}.
\end{proof}

\if0
\section{Other results}
The following lemma establishes convergence of \eqIsgd\ under
non-strong convexity (\assumeConvex).

\begin{lemma}
\label{lemma:implicit}
Suppose that \assumeLinearLoss\ and \assumeConvex\ hold. Then for the update \eqref{eq:aisgd_implicit} it holds
\begin{align}
||\theta_n - \theta_\star||^2 \le ||\theta_{n-1} - \theta_\star||^2 -
	2 \gamma_n \left[ \loss(\theta_n, \xi_n) - \loss(\theta_\star, \xi_n) \right].\nonumber
\end{align}
\end{lemma}
\begin{proof}
Expand $||\theta_{n-1} - \theta_\star||^2 = ||(\theta_{n-1} - \theta_n) + (\theta_n - \theta_\star)||^2$, and use definition \eqref{eq:aisgd_implicit} and the property of the gradient for the convex function $\loss(\cdot, \xi_n)$.
For the full arguments see \cite[Lemma 3.1]{kulis2010implicit} or \cite[Proposition 2.1]{bertsekas2011incremental}.
\end{proof}

\begin{theorem}
\label{theorem:convergence}
Suppose that Assumptions \aLinearLoss, \aGamma, \aLipOne, and \aConvex\ hold.
Then,
\begin{align}
\Ex{\exloss(\thetaBar{n}) - \exloss(\theta_\star)} = \bigO{1/\sqrt{n}}.
\end{align}
\end{theorem}
\begin{proof}
Using Lemma \ref{lemma:implicit} we have
\begin{align}
\label{eq:im1}
\ExCond{||\theta_n - \theta_\star||^2}{\Fn{n-1}} \le ||\theta_{n-1}-\theta_\star||^2 - 2 \gamma_n \ExCond{\loss(\theta_n, \xi_n) - \loss(\theta_\star, \xi_n)}{\Fn{n-1}}.
\end{align}
The difficulty with the implicit updates is the calculation of the expectation term in Eq. \eqref{eq:im1}. In particular,
$\ExCond{\loss(\theta_n, \xi_{n-1})}{\Fn{n-1}, \theta_n}\neq \exloss(\theta_n)$ because the observation $\xi_n$ is not
conditionally independent of $\theta_n$.

However, using Lemma \ref{lemma:implicit_algo} we can rewrite the above equation as
\begin{align}
\label{eq:im2}
\ExCond{||\theta_n - \theta_\star||^2}{\Fn{n-1}} \le
	||\theta_{n-1}-\theta_\star||^2 -
	2 \gamma_n \ExCond{\loss(\theta_{n-1}, \xi_n) - \loss(\theta_\star, \xi_n)}{\Fn{n-1}} -
	2\gamma_n^2 \ExCond{\omega_{n-1}/\gamma_n}{\Fn{n-1}},
\end{align}
where $\omega_{n-1} = \loss(\theta_n, \xi_n) -
						 \loss(\theta_{n-1}, \xi_n)\defeq \loss(\theta_{n-1}-\gamma_n s_n \loss'(\theta_{n-1}, \xi_n) x_n, \xi_n) -
						 \loss(\theta_{n-1}, \xi_n)$.
Let $b_n \triangleq  \Ex{||\theta_n - \theta_\star||^2}$, then through a Taylor expansion of $\omega_{n-1}$,
as in analysis of Ineq.\eqref{useful:eq2b} where
$\tilde{\theta}$ is a point between $\theta_{n-1}$ and
$\theta_n$, we have
\begin{align}
\ExCond{\omega_{n-1}/\gamma_n}{\Fn{n-1}}
& \le
\ExCond{s_n |L'(\tilde{\theta}, \xi_n)| |L'(\theta_{n-1}, \xi_n)|||x_n||^2}{\Fn{n-1}} + \bigO{\gamma_n}.
\nn\\
& \le
\ExCond{|L'(\tilde{\theta}, \xi_n)||L'(\theta_{n-1}, \xi_n)|||x_n||^2}{\Fn{n-1}} + \bigO{\gamma_n}
\commentEq{$s_n\le 1$ by Ineq.\eqref{useful:eq2b}}
\nn\\
& \le \ExCond{|L'(\tilde{\theta}, \xi_n)-L'(\theta_{n-1}, \xi_n) + L'(\theta_{n-1}, \xi_n)||L'(\theta_{n-1}, \xi_n)|||x_n||^2}{\Fn{n-1}} + \bigO{\gamma_n}\nn\\
& \le \ExCond{|L'(\tilde{\theta}, \xi_n)-L'(\theta_{n-1}, \xi_n)||L'(\theta_{n-1}, \xi_n)|||x_n||^2}{\Fn{n-1}} + \bigO{\gamma_n}\nn\\
 & \le \constLipTwo||\theta_{n-1}-\thetastar|| + \bigO{\ExCond{||\theta_n-\theta_{n-1}||}{\Fn{n-1}}}+\bigO{\gamma_n}
 \commentEq{by \assumeLipTwo}\nn\\
& \le  \bigO{b_n^{1/2}+b_{n-1}^{1/2}} + \bigO{\gamma_n}.
\commentEq{by Jensen's inequality}
\end{align}
\begin{align}
\label{eq:im3}
b_n \le b_{n-1} - 2\gamma_n \Ex{\exloss(\theta_{n-1}) - \exloss(\theta_\star)} + \bigO{\gamma_n^2 (b_n^{1/2}+b_{n-1}^{1/2})} + \bigO{\gamma_n^3}.
\end{align}

By \assumeGamma\ and optimality of $\exloss(\theta_\star)$ it follows that $b_n =\bigO{1}$.
Thus, we can replace the last term of Eq. \eqref{eq:im3} with a bound $2B^2$
without loss of generality:
 \begin{align}
\label{eq:im4}
b_n \le b_{n-1} - 2\gamma_n \Ex{\exloss(\theta_{n-1}) - \exloss(\theta_\star)} +2 \gamma_n^2 B^2.
\end{align}
The proof can now follow the technique of \cite[Section 2.2]{nemirovski2009robust}.
Rearranging the terms we get
 \begin{align}
\label{eq:im5}
\gamma_n \Ex{\exloss(\theta_{n-1}) - \exloss(\theta_\star)} \le \frac{b_{n-1} - b_n}{2} + \gamma_n^2 B^2.
\end{align}
By telescoping \eqref{eq:im5}
 \begin{align}
\label{eq:im6}
\sum_{i=1}^n \gamma_i \Ex{\exloss(\theta_{i-1}) - \exloss(\theta_\star)} \le \frac{b_0}{2} + \sum_{i=1}^n \gamma_i^2 B^2.
\end{align}
Let $\Gamma(n) = \sum_{i=1}^n \gamma_i$ and $w_i = \gamma_i /A(n)$, then
 \begin{align}
\label{eq:im7}
\sum_{i=1}^n w_i \Ex{\exloss(\theta_{i-1}) - \exloss(\theta_\star)} \le \frac{b_0/2 + \sum_{i=1}^n \gamma_i^2 B^2}{A(n)}.
\end{align}
By convexity of the expected loss and taking $\gamma_n \equiv \gamma$, it follows that $w_i=1/n$ and
 \begin{align}
\label{eq:im8}
\Ex{\exloss(\thetaBar{n}) - \exloss(\theta_\star)} \le
\sum_{i=1}^n (1/n) \Ex{\exloss(\theta_{i-1}) - \exloss(\theta_\star)} \le \frac{b_0/2 +n\gamma^2 B^2}{\gamma n}.
\end{align}
For the optimal choice of $\gamma$ it follows that $\Ex{\exloss(\thetaBar{n}) - \exloss(\theta_\star)} = \bigO{1/\sqrt{n}}$.
The case for decreasing $\gamma_n$ is very similar.
\end{proof}
\fi

\section{Data sets used in experiments}
\begin{table*}[!h]
  \centering
\begin{tabular}{|l|l|l|l|l|l|l|l|l|l|l|}
\hline
       & description          & type   & features& training set & test set & $\lambda$\\
\hline
covtype& forest cover type    & sparse & 54      & 464,809 & 116,203 & $10^{-6}$\\
delta  & synthetic data       & dense  & 500     & 450,000 & 50,000     & $10^{-2}$\\
rcv1   & text data            & sparse & 47,152  & 781,265 & 23,149  & $10^{-5}$\\
mnist  & digit image features & dense  & 784     & 60,000  & 10,000  & $10^{-3}$\\
sido   & molecular activity   & dense  & 4,932   & 10,142  & 2,536   &
$10^{-3}$\\
alpha  & synthetic data       & dense  & 500     & 400k    & 50k     & $10^{-5}$\\
beta   & synthetic data       & dense  & 500     & 400k    & 50k     & $10^{-4}$\\
gamma  & synthetic data       & dense  & 500     & 400k    & 50k     & $10^{-3}$\\
epsilon& synthetic data       & dense  & 2000    & 400k    & 50k     & $10^{-5}$\\
zeta   & synthetic data       & dense  & 2000    & 400k    & 50k     & $10^{-5}$\\
fd     & character image      & dense  &   900   & 1000k   & 470k    & $10^{-5}$\\
ocr    & character image      & dense  &  1156   & 1000k   & 500k    & $10^{-5}$\\
dna    & DNA sequence         & sparse &   800   & 1000k   & 1000k   & $10^{-3}$\\
\hline
\end{tabular}
\caption{\label{datasets} Summary of data sets and the $L_2$ regularization parameter
$\lambda$ used}
\end{table*}

Table \ref{datasets} includes a full summary of all data sets considered in our
experiments. The majority of regularization parameters are set according to
\citet{xu2011towards}.

\end{document}